\def\BibTeX{{\rm B\kern-.05em{\sc i\kern-.025em b}\kern-.08em
    T\kern-.1667em\lower.7ex\hbox{E}\kern-.125emX}}
\newtheorem{thm}{Theorem}
\newtheorem{lem}[thm]{Lemma}
\newtheorem{rem}{Remark}
\def\delequal{\mathrel{\ensurestackMath{\stackon[1pt]{=}{\scriptstyle\Delta}}}}
\DeclareMathOperator*{\minimize}{minimize}
\pgfplotsset{compat=1.18}
\newacronym{iot}{IoT}{Internet of things}
\newacronym{bs}{BS}{base station}
\newacronym{miso}{MISO}{multiple-input single-output}
\newacronym{simo}{SIMO}{single-input multiple-output}
\newacronym{mimo}{MIMO}{multiple-input multiple-output}
\newacronym{siso}{SISO}{single-input single-output}
\newacronym{bps}{bps}{bits per second}
\newacronym{mrt}{MRT}{maximum-ratio transmission}
\newacronym{los}{LoS}{line-of-sight}
\newacronym{awgn}{AWGN}{additive white Gaussian noise}
\newacronym{cmos}{CMOS}{complementary metal-oxide semiconductor}
\newacronym{ecc}{ECC}{error control coding}
\newacronym{adc}{ADC}{analog-to-digital converter}
\newacronym{dac}{DAC}{digital-to-analog converter}
\newacronym{dsp}{DSP}{digital signal processing}
\newacronym{lpf}{LPF}{low-pass filter}
\newacronym{bpf}{BPF}{band-pass filter}
\newacronym{lo}{LO}{local oscillator}
\newacronym{rf}{RF}{radio frequency}
\newacronym{dc}{DC}{direct current}
\newacronym{if}{IF}{intermediate frequency}
\newacronym{pa}{PA}{power amplifier}
\newacronym{wpt}{WPT}{wireless power transfer}
\newacronym{snr}{SNR}{signal-to-noise ratio}
\newacronym{ber}{BER}{bit error rate}
\newacronym{per}{PER}{packet error rate}
\newacronym{qam}{QAM}{quadrature amplitude modulation}
\newacronym{bpsk}{BPSK}{binary phase-shift keying }
\newacronym{qpsk}{QPSK}{quadrature phase-shift keying}
\newacronym{oqpsk}{OQPSK}{offset quadrature phase-shift keying}
\newacronym{psd}{PSD}{power spectral density}
\newacronym{eirp}{EIRP}{effective isotropic radiated power}
\newacronym{ism}{ISM}{industrial, scientific and medical}
\newacronym{fcc}{FCC}{Federal Communications Commission}
\newacronym{lp}{LP}{low-power}
\newacronym{mp}{MP}{medium-power}
\newacronym{hp}{HP}{high-power}
\newacronym{cdf}{CDF}{cumulative distribution function}
\newacronym{nlos}{NLOS}{non-line-of-sight}
\newacronym{ppp}{PPP}{Poisson point process}
\newacronym{pdf}{PDF}{probability density function}
\newacronym{ris}{RIS}{reconfigurable intelligent surface}
\newacronym{csi}{CSI}{channel state information}
\newacronym{af}{AF}{amplify and forward}
\newacronym{eh}{EH}{energy harvesting}
\newacronym{em}{EM}{electromagnetic}
\newacronym{ap}{AP}{access point}
\newacronym{nim}{NIM}{negative-index material}
\newacronym{srr}{SRR}{split ring resonator}
\newacronym{iid}{i.i.d.}{independent and identically distributed}
\newacronym{ml}{ML}{maximum likelihood}
\newacronym{lse}{LSE}{least square error}
\newacronym{mse}{MSE}{mean squared error}
\newacronym{rmse}{RMSE}{root mean squared error}
\newacronym{svd}{SVD}{singular value decomposition}\newacronym{mmse}{MMSE}{minimum mean square error}
\newacronym{swipt}{SWIPT}{simultaneous wireless information and power transfer}
\newacronym{blue}{BLUE}{best linear unbiased estimator}
\newacronym{nap}{NAP}{normalized achieved power}
\newacronym{mnap}{MNAP}{mean normalized achieved power}
\newacronym{star}{STAR}{simultaneously transmitting and reflecting}
\begin{document}
\bstctlcite{IEEEexample:BSTcontrol}

\title{Energy Harvesting Maximization for Reconfigurable Intelligent Surfaces Using Amplitude Measurements}

	\author{Morteza~Tavana,~\IEEEmembership{Student Member,~IEEE,}
		Meysam~Masoudi,
		and~Emil~Bj\"ornson,~\IEEEmembership{Fellow,~IEEE,}
	\thanks{M. Tavana, and E. Bj\"ornson are with the School of Electrical Engineering and Computer Science, KTH Royal Institute of Technology, Stockholm, Sweden. E-mail: \{morteza2, ~emilbjo\}@kth.se.}
	\thanks{M. Masoudi is with Ericsson, Global AI Accelerator (GAIA) unit, Sweden. E-mail: meysam.masoudi@ericsson.com}
	\thanks{A preliminary version of this paper \cite{Mtavana_Asilomar_2023} has been accepted in the IEEE Asilomar Conference on Signals, Systems, and Computers, 2023.}
    \thanks{This paper was supported by Digital Futures.}
}

\maketitle

\vspace{-5mm}
\begin{abstract}
Energy harvesting can enable a reconfigurable intelligent surface (RIS) to self-sustain its operations without relying on external power sources. In this paper, we consider the problem of energy harvesting for RISs in the absence of coordination with the ambient RF source. We propose a series of sequential phase-alignment algorithms that maximize the received power based on only power measurements. We prove the convergence of the proposed algorithm to the optimal value for the noiseless scenario. However, for the noisy scenario, we propose a linear least squares estimator. We prove that within the class of linear estimators, the optimal set of measurement phases are equally-spaced phases. To evaluate the performance of the proposed method, we introduce a random phase update algorithm as a benchmark. Our simulation results show that the proposed algorithms outperform the random phase update method in terms of achieved power after convergence while requiring fewer measurements per phase update. Using simulations, we show that in a noiseless scenario with a discrete set of possible phase shifts for the RIS elements, the proposed method is sub-optimal, achieving a higher value than the random algorithm but not exactly the maximum feasible value that we obtained by exhaustive search.
\end{abstract}

\begin{IEEEkeywords}
Energy harvesting, phased array, reconfigurable intelligent surface, zero-energy devices.\vspace{0mm}
\end{IEEEkeywords}

\IEEEpeerreviewmaketitle
\section{Introduction}
\IEEEPARstart{F}{uture} wireless networks should provide seamless connectivity for the rapidly growing number of devices and services\cite{iot2025,international2015imt,8808168}. If wireless networks are implemented in the same manner as before, the energy consumption would keep increasing dramatically with the traffic volume. From both carbon footprint and energy consumption perspectives, it is a necessity to develop energy-saving techniques that can be implemented in the network nodes, including low-power and zero-energy devices \cite{MTavana_IoTJ_2021, MTavana_2022_ICC, Mtavana_VTCW_2022}.

Traditional wireless networks have no control of the radio propagation environment. Providing connectivity for regions with low \gls*{snr} comes at the cost of deploying more sophisticated transmission schemes, more radio resources such as antennas and spectrum and consequently consuming more energy \cite{goldsmith2005wireless, 9140329}.

With the emergence of the \glspl*{ris}, several limiting factors associated with the propagation environment can be eliminated. A \gls*{ris}  can manipulate the propagation environment to increase the signal strength in the desired direction and guide the \gls*{em} waves toward the receiver via engineered reflections \cite{9048622,6206517,9721205,9206044}. The \gls*{ris} is primarily envisioned for providing coverage for regions that are blocked by objects\cite{9140329}. 

For instance, a \gls*{ris} can be deployed in a city with dense buildings and poor \gls*{los} conditions, or it can also be deployed in homes, where the walls obstruct the signal path.

We can also see the concept of \gls*{ris} as an alternative to traditional relays, which have similar use cases. \glspl*{ris} offer some unique advantages that make them a promising technology for the future of wireless communication systems. \glspl*{ris} can be better than traditional relays in terms of energy efficiency, as they do not require power-hungry amplifiers and can use low-power hardware components. Moreover, \glspl*{ris} do not introduce significant latency into the communication system, making them well-suited for applications such as augmented reality and the \gls*{iot}. Last but not least, \glspl*{ris} can be integrated into existing wireless communication systems with minimal changes, making them a cost-effective solution for improving the performance of existing networks.\footnote{Relays of the repeater type have the latter two advantages, but have a substantially higher energy consumption than \glspl*{ris}.} These advantages make \glspl*{ris} a promising technology for improving the performance of wireless communication systems \cite{8319526,8741198}.  However, \glspl*{ris} come with their own set of challenges, such as limited bandwidth, the risk of generating interference to some users, cumbersome configuration, and a higher cost of deployment and maintenance. Ultimately, the choice between traditional relays and \glspl*{ris} depends on a careful weighing of these factors and an understanding of the specific needs and goals of each situation.

In the presence of a wired power supply, \gls*{ris} may not have clear benefits compared to relays and small \glspl*{bs}, and the respective advantages are debatable \cite{Bjornson2020a}. However, if the \gls*{ris} is self-sustainable, it opens up new deployment possibilities where there is no competition. This is where \gls*{eh} comes into play. By harvesting energy from \gls*{rf} signals that are already present in the environment, the \gls*{ris} can operate without relying on external power sources \cite{ntontin2022wireless,9799791,9148892}. This is particularly beneficial in situations where the \gls*{ris} is deployed in remote or inaccessible locations where it can be difficult or expensive to provide a continuous power source. In \cite{8741198}, the authors developed a model for the \gls*{ris} power consumption that is based on the number of elements and their phase resolution capability. Higher phase resolution in the \gls*{ris} increases the complexity and the power consumption \cite{7370753}.

There are other related works.
The study \cite{9214497} considers a hybrid-relaying scheme empowered by a self-sustainable \gls*{ris} to simultaneously improve the downlink energy transmission and uplink information transmission. The authors proposed time-switching and power-splitting schemes for \gls*{ris} operation.
The paper \cite{9895266} proposes a novel framework for \gls*{wpt} system using a \gls*{ris} to improve power transfer efficiency. The proposed framework employs independent beamforming to replace conventional joint beamforming, which results in higher efficiency. 
In \cite{9679387}, a \gls*{ris}-aided simultaneous terahertz information and power transfer is proposed to maximize the sum data rate of the information users while ensuring that the power harvesting requirements of energy users and the \gls*{ris} are met.

In \cite{10000917}, the authors propose a multi-functional \gls*{ris} (MF-\gls*{ris}) that utilizes \gls*{rf} energy to support reflection and amplification of incident signals. The MF-\gls*{ris} mitigates double-fading attenuation experienced by passive \glspl*{ris} \cite{9998527}.
The study \cite{10025760} considers an \gls*{ris}-assisted \gls*{swipt} scheme that integrates the \gls*{ris} with non-coherent differential chaos shift keying to compensate for energy loss. The authors analyze the \gls*{ber} and energy shortage probability of the scheme over the multipath Rayleigh fading channel. 
Finally, in \cite{9149709}, the authors consider a \gls*{ris}-assisted data transmission from a multi-antenna \gls*{ap} to a receiver. The system operation is cyclic, and in each cycle, there are two operational phases: 1) the \gls*{ris} array harvests energy from the transmitter. 2) In return, \gls*{ris} assists the \gls*{ap} to transmit its data to the receiver.

The state-of-the-art techniques consider perfect \gls*{csi} at the \gls*{ris}, which is obtained by coordination between the transmitters and the \gls*{ris}. The existing solutions require extra hardware (i.e., multiple \gls*{rf} receivers to measure both amplitude and phase to obtain \gls*{csi}) and signaling, which in turn increases the energy consumption and cost of the \gls*{ris}. 

We consider a different scenario, where the \gls*{ris} must configure itself without any \gls*{rf} receivers. The proposed method makes power measurements in the \gls*{eh} units and uses them to maximize the harvesting power iteratively. 

Our preliminary results in \cite{Mtavana_Asilomar_2023} present an amplitude-based sequential optimization of energy harvesting with \glspl*{ris} for the noiseless scenario using continuous phase control. In this paper, we analyze the impact of noise on the phase alignment algorithm and present a linear phase estimator that can be embedded in our proposed sequential phase alignment algorithm. We derive the optimal set of measurement phases within the class of linear estimators. Also, we formulate the problem for a discrete phase control scheme in a noiseless scenario. The main contributions of this paper, as compared to the previous works, are as follows:

\begin{itemize}
    \item We propose a series of sequential phase-alignment algorithms to maximize the received power at the harvesting unit based on power measurements, in different situations.
    \item We prove the optimality of the proposed algorithm analytically for the noiseless scenario with continuous phase control.
    \item We formulate the corresponding problem for a noiseless scenario with discrete phase control and propose a heuristic sequential phase alignment algorithm inspired by the continuous phase control algorithm. 
    \item We extend the result to noisy scenarios and propose a simple linear estimator, that can be embedded in our proposed sequential phase alignment algorithm.
    We optimize the phase measurement configurations for the proposed linear estimator, and we prove that equally spaced phases are optimal.
    \item Our simulation results show that the proposed algorithm greatly outperforms the benchmark random phase-update method in terms of the number of required measurements to achieve the optimum.
\end{itemize}

The remainder of this paper is organized as follows. Section~\ref{sec:PF} describes the \gls*{ris} hardware architecture, two different \gls*{eh} schemes in the \gls*{ris}, and the phase alignment problem. Section \ref{Sec:Single_phase} presents the proposed phase estimators for the received power maximization by tuning the phase of a \gls*{ris} element.
Section \ref{sec:PS} describes the proposed abstract model of the \gls*{ris} operation and the proposed solution for continuous and discrete phase control scenarios. Simulation results are presented in Section \ref{sec:NumResults}, while Section \ref{sec:Conclusion} provides our conclusions.

\textit{Notations}: We denote sets by upper-case script letters, e.g., $\mathcal{S}$, or upper-case Greek letters, e.g., $\Omega$. The only exceptions are the sets of natural numbers, real numbers, and complex numbers that we represent with $\mathbb{N}$, $\mathbb{R}$, and $\mathbb{C}$, respectively.  The cardinality of a set $\mathcal{S}$ is represented by $|\mathcal{S}|$. Vectors are indicated by lower-case bold-face letters, e.g., $\bm{x}$, and $x_{i}$ denotes the $i$th element of $\bm{x}$. We represent matrices by upper-case bold-face letters, e.g., $\bm{A}$, and $\left[\bm{A}\right]_{m,n}$ indicates the element of $\bm{A}$ with row number $m$ and column number $n$. The identity matrix of size $n$ is denoted by $\bm{I}_{n}$. The expectation and covariance operators are represented by $\mathbb{E}{\left(\cdot\right)}$ and $\mathbb{C}\mathrm{ov}{\left(\cdot\right)}$, respectively. Also, $\mathbb{E}_{\mathrm{X}}{\left(\cdot\right)}$ denotes the expectation operation with respect to the random variable $\mathrm{X}$. We represent random variables with upright upper-case letters, e.g., $\mathrm{X}$, and we denote random vectors with upright upper-case bold-face letters, e.g., $\bm{\mathrm{X}}$.   Also, we use $p_{\mathrm{X}}{\left(x\right)}$ to indicate the \gls*{pdf} of a continuous random variable $\mathrm{X}$ at $x$. We use $\bm{1}$ to represent the all-one vector. The curled inequality $\succeq $ indicates componentwise inequality between vectors. We also use $\delequal$ to indicate an equal by definition sign. With $\operatorname{tr}\!{\left(\cdot\right)}$ and $\left(\cdot\right)^{\mathsf{T}}$, we denote the trace and transpose operators. The operator $\operatorname{diag}\!{\left(\bm{x}\right)}$ returns a square diagonal matrix with the elements of vector $\bm{x}$ on the main diagonal. We denote the imaginary unit by $j\delequal\sqrt{-1}$. We represent the conjugate of a complex number $z$ with $z^{*}$. However, we denote the optimal solution with the superscript ${\star}$, e.g., $x^{\star}$. Also, the operation $\operatorname{Arg}{\left(z\right)}$ returns the principal value of the argument of $z$ that lies within the interval $(-\pi, \pi]$, while $\arg{\left(z\right)}$ returns the set of all possible values of the argument of $z$.

\section{Problem Description}\label{sec:PF}
This section presents 1) the \gls*{ris} device architecture, 2) the \gls*{eh} schemes, and 3) the phase alignment problem for \gls*{ris}-assisted \gls*{eh}.

\subsection{RIS Hardware Architecture}
Fig.~\ref{fig:RIS_structure} illustrates the hardware architecture of a typical \gls*{ris}. The front layer consists of several metal elements that are printed on a dielectric substrate and arranged in a two-dimensional array to reflect the incoming signals. The size of the elements is typically smaller than the wavelength of the signal of interest \cite{chen2016review, he2019tunable,9306896}. For each patch element, there is a controllable circuit that is used to adjust the phase of the reflected signals and direct the signal in the desired direction. 

There is a copper layer behind the dielectric substrate to reduce the \gls*{rf} waves leakage from the back \cite{8910627, 9161157}. There is a control circuit board that can control the reflection amplitudes and phase shifts of the patch elements. The \gls*{ris} controller can communicate with the network components such as \glspl*{bs} via a connectivity interface to receive the reflection state configuration \cite{9140329, wu2021intelligent, tsilipakos2020toward}.  Finally, the \gls*{ris} requires a power supply to adjust the phases and then maintain the desired reflection state.

\begin{figure}[t]
    \vspace{-10mm}
    \hspace{-21mm}
    \scalebox{0.75}{\definecolor{mycolor1}{rgb}{0.85,0.325,0.098}
\definecolor{mycolor2}{rgb}{0.00000,0.44700,0.74100}
\definecolor{mycolor3}{rgb}{0.00000,0.49804,0.00000}
\definecolor{apricot}{rgb}{0.98, 0.81, 0.69}
\definecolor{copper}{rgb}{0.72, 0.45, 0.2}
\definecolor{babyblue}{rgb}{0.54, 0.81, 0.94}
\definecolor{gray}{rgb}{0.5, 0.5, 0.5}
\usetikzlibrary{decorations.pathreplacing}

\tikzset{radiation/.style={{decorate,decoration={expanding waves,angle=90,segment length=4pt}}},
         relay/.pic={
        code={\tikzset{scale=5/10}
            \draw[semithick] (0,0) -- (1,4);
            \draw[semithick] (3,0) -- (2,4);
            \draw[semithick] (0,0) arc (180:0:1.5 and -0.5) node[above, midway]{#1};
            \node[inner sep=4pt] (circ) at (1.5,5.5) {};
            \draw[semithick] (1.5,5.5) circle(8pt);
            \draw[semithick] (1.5,5.5cm-8pt) -- (1.5,4);
            \draw[semithick] (1.5,4) ellipse (0.5 and 0.166);
            \draw[semithick,radiation,decoration={angle=45}] (1.5cm+8pt,5.5) -- +(0:2);
            \draw[semithick,radiation,decoration={angle=45}] (1.5cm-8pt,5.5) -- +(180:2);
  }}
}

\begin{tikzpicture}[cross/.style={path picture={ 
  \draw[black]
(path picture bounding box.south east) -- (path picture bounding box.north west) (path picture bounding box.south west) -- (path picture bounding box.north east);
}}]
\pgfmathsetmacro{\R}{0.5}
\pgfmathsetmacro{\W}{1.25}
\pgfmathsetmacro{\H}{1.25}
\draw[fill=gray] (-7.5,1.73) rectangle (-6,3.21) node[white] at (-6.75,2.32) {\small Controller};
\node[white] at (-6.75,2.69) {\small RIS};
\foreach \i in {1,...,14} {
    \draw[] (-6,3.23-0.1*\i)-- (-5.9,3.23-0.1*\i);
    \draw[] (-7.5,3.23-0.1*\i)-- (-7.6,3.23-0.1*\i);
    \draw[] (-6-0.1*\i,3.23)-- (-6-0.1*\i,3.33);
    \draw[] (-6-0.1*\i,1.73)-- (-6-0.1*\i,1.63);}
\draw [stealth-stealth,thick] (-5.9,2.48) -- (-3.96,2.48);
\path (-7.125,-1.7)  pic[scale=0.5,color=black] {relay={\small BS}};
\draw [stealth-stealth,thick] (-6.75,1.63) -- (-6.75,0);
\filldraw[fill=babyblue, draw=babyblue] (-3.14-0.82,-1.5+0.82) rectangle (0.88-0.82,2.52+0.82) node[] at (-1.2-0.82,2.15+0.82) {\small Control circuit board};
\filldraw[fill=copper, draw=copper] (-3.14,-1.5) rectangle (0.88,2.52) node[] at (-1.2,2.15) {\small Copper};
\filldraw[fill=apricot, draw=apricot] (-2.32,-2.32) rectangle (1.7,1.7);
\foreach \i in {-3,...,2} {
    \foreach \j in {-3,...,2} {
    \node [draw, copper, line width=2,trapezium, trapezium left angle=90, trapezium right angle=90, minimum width=\R cm,outer sep=0pt,fill=apricot, rotate = 0] at (\i*\W*\R,\j*\H*\R) {};
    \draw[copper, line width=2] (\i*\W*\R,\j*\H*\R+0.5*\R) -- (\i*\W*\R,\j*\H*\R+0.15*\R);
    \draw[copper, line width=2] (\i*\W*\R,\j*\H*\R-0.5*\R) -- (\i*\W*\R,\j*\H*\R-0.15*\R);
    \draw[copper, line width=2] (\i*\W*\R-0.25*\R,\j*\H*\R-0.15*\R) -- (\i*\W*\R+0.25*\R,\j*\H*\R-0.15*\R);
    \draw[copper, line width=2] (\i*\W*\R-0.25*\R,\j*\H*\R+0.15*\R) -- (\i*\W*\R+0.25*\R,\j*\H*\R+0.15*\R);}}
    
\begin{polaraxis}[at={(-3.87in,-1.65in)},
		grid=none,
		width=9cm,
		xtick=\empty,
		ytick=\empty,
		rotate=5,
		axis line style={draw=none},
		tick style={draw=none}]
		\addplot+[mark=none, fill=mycolor3, opacity=0.4, line width=0.75pt, color = mycolor3, domain=-180:180,samples=600] 
		{0.8*abs(sin(2*x*3.141592)/(3.141592*3.141592/180*x))}; 
\end{polaraxis}

\begin{polaraxis}[at={(-1.9in,-1.42in)},
		grid=none,
		width=9cm,
		xtick=\empty,
		ytick=\empty,
		rotate=30,
		axis line style={draw=none},
		tick style={draw=none}]
		\addplot+[mark=none, fill=mycolor3, opacity=0.4, line width=0.75pt, color = mycolor3, domain=-180:180,samples=600] 
		{0.1*abs(sin(4*x*3.141592)/(3.141592*3.141592/180*x))}; 
\end{polaraxis}

\begin{polaraxis}[at={(-1.84in,-1.65in)},
		grid=none,
		width=9cm,
		xtick=\empty,
		ytick=\empty,
		rotate=-15,
		axis line style={draw=none},
		tick style={draw=none}]
		\addplot+[mark=none, fill=mycolor3, opacity=0.4, line width=0.75pt, color = mycolor3, domain=-180:180,samples=600] 
		{0.1*abs(sin(4*x*3.141592)/(3.141592*3.141592/180*x))}; 
\end{polaraxis}

\begin{polaraxis}[at={(-1.89in,-1.94in)},
		grid=none,
		width=9cm,
		xtick=\empty,
		ytick=\empty,
		rotate=-60,
		axis line style={draw=none},
		tick style={draw=none}]
		\addplot+[mark=none, fill=mycolor3, opacity=0.4, line width=0.75pt, color = mycolor3, domain=-180:180,samples=600] 
		{0.1*abs(sin(4*x*3.141592)/(3.141592*3.141592/180*x))}; 
\end{polaraxis}

\end{tikzpicture}}
    \vspace{-15mm}
    \caption{Hardware structure of a RIS.}
    \label{fig:RIS_structure}
\end{figure}
\subsection{Energy Harvesting Schemes}
In general, an \gls*{ris} can harvest \gls*{rf} energy directly or indirectly from an ambient \gls*{rf} source. We will describe these scenarios below and later show that they lead to system models of the same kind. 

\subsubsection{Direct EH}
The \gls*{ris} elements are capable of both reflecting and receiving the incident \gls*{em} waves. During the harvesting phase, the \gls*{ris} operates in the reception mode, where it combines the received signals from each element with some phase shifts.\footnote{There is a type of metasurface implementation called holographic beamforming that can pass the incident EM waves from one side to the other. The metamaterial can add adjustable phase shifts, similar to a phased array, but with a different implementation \cite{black2017holographic}.} The \gls*{ris} can use the harvested energy to fully sustain its operation or, if the energy is insufficient, it can decrease the consumption from other energy sources, as a first step toward achieving a zero-energy \gls*{ris} system. The abstract model of the direct \gls*{eh} operation is shown in Fig.~\ref{fig:direct_RIS}.

\subsubsection{Indirect EH}
The \gls*{ris} is generally deployed and designed to reflect the \gls*{em} waves towards a desired location (typically a receiver location). Inspired by this principle, the \gls*{eh} device can alternatively be deployed in front of the \gls*{ris} (at a short distance), and the phases of the \gls*{ris} elements can be aligned so they combine constructively at the location of the \gls*{eh} device. This device then can return the energy to the \gls*{ris} via a cable. The concept of indirect \gls*{eh} is demonstrated in Fig.~\ref{fig:Indirect_RIS}.

\begin{figure*}[!t]
          \hspace{-14mm}
          \captionsetup[subfigure]{oneside,margin={2cm,0cm}}
          \subfloat[Direct \gls*{eh}.]{\scalebox{0.72}{\definecolor{mycolor1}{rgb}{0.85,0.325,0.098}
\definecolor{mycolor2}{rgb}{0.00000,0.44700,0.74100}
\definecolor{mycolor3}{rgb}{0.00000,0.49804,0.00000}
\definecolor{carnationpink}{rgb}{1.0, 0.65, 0.79}
\definecolor{carnelian}{rgb}{0.7, 0.11, 0.11}
\definecolor{caputmortuum}{rgb}{0.35, 0.15, 0.13}
\definecolor{apricot}{rgb}{0.98, 0.81, 0.69}
\definecolor{copper}{rgb}{0.72, 0.45, 0.2}
\definecolor{babyblue}{rgb}{0.54, 0.81, 0.94}
\definecolor{gray}{rgb}{0.5, 0.5, 0.5}
\usetikzlibrary{decorations.pathreplacing}
\linespread{0.6}
\tikzstyle{startstop} = [rectangle, rounded corners, minimum width=1.5cm, minimum height=1cm,text centered, draw=black, fill=gray!20, inner sep=-0.3ex]

\tikzstyle{arrow} = [thick,->,>=stealth]

\begin{tikzpicture}[cross/.style={path picture={ 
  \draw[black]
(path picture bounding box.south east) -- (path picture bounding box.north west) (path picture bounding box.south west) -- (path picture bounding box.north east);
}}]

\tikzset{radiation/.style={{decorate,decoration={expanding waves,angle=90,segment length=4pt}}},
         relay/.pic={
        code={\tikzset{scale=5/10}
            \draw[thick] (0,0) -- (1,4);
            \draw[thick] (3,0) -- (2,4);
            \draw[thick] (0,0) arc (180:0:1.5 and -0.5) node[above, midway]{#1};
            \node[inner sep=4pt] (circ) at (1.5,5.5) {};
            \draw[thick] (1.5,5.5) circle(8pt);
            \draw[thick] (1.5,5.5cm-8pt) -- (1.5,4);
            \draw[thick] (1.5,4) ellipse (0.5 and 0.166);
            \draw[thick,radiation,decoration={angle=45}] (1.5cm+8pt,5.5) -- +(0:2);
            \draw[thick,radiation,decoration={angle=45}] (1.5cm-8pt,5.5) -- +(180:2);
  }}
}

\pgfmathsetmacro{\R}{0.375}
\pgfmathsetmacro{\W}{1.25}
\pgfmathsetmacro{\H}{1.25}
\pgfmathsetmacro{\RISx}{-5.455}
\pgfmathsetmacro{\RISy}{3.6}

\filldraw[fill=apricot, draw=apricot] (\RISx-2.11,\RISy-1.65) rectangle (\RISx+1.64,\RISy+1.18);
\foreach \i in {-4,...,3} {
    \foreach \j in {-3,...,2} {
    \node [draw, caputmortuum, line width=1,trapezium, trapezium left angle=90, trapezium right angle=90, minimum width=\R cm,outer sep = 0pt,fill=apricot, rotate = 0] at (\RISx+\i*\W*\R,\RISy+\j*\H*\R) {};
    \draw[caputmortuum, line width=1] (\RISx+\i*\W*\R,\RISy+\j*\H*\R+0.5*\R) -- (\RISx+\i*\W*\R,\RISy+\j*\H*\R+0.15*\R);
    \draw[caputmortuum, line width=1] (\RISx+\i*\W*\R,\RISy+\j*\H*\R-0.5*\R) -- (\RISx+\i*\W*\R,\RISy+\j*\H*\R-0.15*\R);
    \draw[caputmortuum, line width=1] (\RISx+\i*\W*\R-0.25*\R,\RISy+\j*\H*\R-0.15*\R) -- (\RISx+\i*\W*\R+0.25*\R,\RISy+\j*\H*\R-0.15*\R);
    \draw[caputmortuum, line width=1] (\RISx+\i*\W*\R-0.25*\R,\RISy+\j*\H*\R+0.15*\R) -- (\RISx+\i*\W*\R+0.25*\R,\RISy+\j*\H*\R+0.15*\R);}}

\path (-11,1)  pic[scale=0.5,color=black] {relay={TX}};

\node[text width=5.5em]  (SC)  at (-1,3.363) [startstop] {Signal\\\vspace{1mm}combiner};

\node[text width=5.5em]  (PS)  at (-5.69,6) [startstop] {Power\\\vspace{1mm}supply};

\node[text width=5.5em]  (EH)  at (-1,6) [startstop] {EH device};

\draw[arrow, line width=0.5mm, mycolor2] (-3.812,3.363) -- (SC.west);

\draw[arrow, line width=0.5mm, mycolor2] (SC.north) -- (EH.south);
\draw[arrow, line width=0.5mm, mycolor2] (EH.west) -- (PS.east);

\draw[arrow, line width=0.5mm, mycolor2] (PS.south) --++ (0,-0.7);

\node[caputmortuum] at (-5.7, 1.5) {\Large Reception mode};

\node[black] at (-9, 1) {$\bm{h}\delequal \begin{bmatrix}
h_{1} \\
\vdots \\
h_{N}
\end{bmatrix}$};


\node[black] at (-2.1, 5.3) {\small Measurements};

\draw[arrow, dotted, line width=0.5mm, mycolor2] (-1.95, 5.75) --++ (-1.22,0) --++ (0,-0.885);


\pgfmathsetmacro{\ControllerX}{2.52}
\pgfmathsetmacro{\ControllerY}{3.15}
\begin{pgflowlevelscope}{\pgftransformscale{0.75}}
\draw[fill=gray] (\ControllerX-7.5,\ControllerY+1.73) rectangle (\ControllerX-6,\ControllerY+3.23) node[white] at (\ControllerX-6.75,\ControllerY+2.30) {\small Controller};
\node[white] at (\ControllerX-6.75,\ControllerY+2.69) {\small RIS};
\foreach \i in {1,...,14} {
    \draw[] (\ControllerX-6,\ControllerY+3.23-0.1*\i)-- (\ControllerX-5.9,\ControllerY+3.23-0.1*\i);
    \draw[] (\ControllerX-7.5,\ControllerY+3.23-0.1*\i)-- (\ControllerX-7.6,\ControllerY+3.23-0.1*\i);
    \draw[] (\ControllerX-6-0.1*\i,\ControllerY+3.23)-- (\ControllerX-6-0.1*\i,\ControllerY+3.33);
    \draw[] (\ControllerX-6-0.1*\i,\ControllerY+1.73)-- (\ControllerX-6-0.1*\i,\ControllerY+1.63);}
\end{pgflowlevelscope}

\begin{polaraxis}[at={(-4.61in,0.2in)},
		grid=none,
		width=5.5cm,
		xtick=\empty,
		ytick=\empty,
		rotate=15,
		axis line style={draw=none},
		tick style={draw=none}]
		\addplot+[mark=none, fill=mycolor3, opacity=0.4, line width=0.75pt, color = mycolor3, domain=-180:180,samples=600] 
		{0.1*abs(sin(0.5*x*3.141592)/(3.141592*3.141592/180*x))}; 
\end{polaraxis}

\end{tikzpicture}}%
              \label{fig:direct_RIS}}
          \hfil
          \hspace{-6mm}
          \subfloat[Indirect  \gls*{eh}.]{\scalebox{0.72}{\definecolor{mycolor1}{rgb}{0.85,0.325,0.098}
\definecolor{mycolor2}{rgb}{0.00000,0.44700,0.74100}
\definecolor{mycolor3}{rgb}{0.00000,0.49804,0.00000}

\definecolor{apricot}{rgb}{0.98, 0.81, 0.69}
\definecolor{copper}{rgb}{0.72, 0.45, 0.2}
\definecolor{babyblue}{rgb}{0.54, 0.81, 0.94}
\definecolor{gray}{rgb}{0.5, 0.5, 0.5}
\usetikzlibrary{decorations.pathreplacing}
\linespread{0.6}
\tikzstyle{startstop} = [rectangle, rounded corners, minimum width=1.5cm, minimum height=1cm,text centered, draw=black, fill=gray!20, inner sep=-0.3ex]

\tikzstyle{arrow} = [thick,->,>=stealth]

\begin{tikzpicture}[cross/.style={path picture={ 
  \draw[black]
(path picture bounding box.south east) -- (path picture bounding box.north west) (path picture bounding box.south west) -- (path picture bounding box.north east);
}}]

\tikzset{radiation/.style={{decorate,decoration={expanding waves,angle=90,segment length=4pt}}},
         relay/.pic={
        code={\tikzset{scale=5/10}
            \draw[thick] (0,0) -- (1,4);
            \draw[thick] (3,0) -- (2,4);
            \draw[thick] (0,0) arc (180:0:1.5 and -0.5) node[above, midway]{#1};
            \node[inner sep=4pt] (circ) at (1.5,5.5) {};
            \draw[thick] (1.5,5.5) circle(8pt);
            \draw[thick] (1.5,5.5cm-8pt) -- (1.5,4);
            \draw[thick] (1.5,4) ellipse (0.5 and 0.166);
            \draw[thick,radiation,decoration={angle=45}] (1.5cm+8pt,5.5) -- +(0:2);
            \draw[thick,radiation,decoration={angle=45}] (1.5cm-8pt,5.5) -- +(180:2);
  }}
}

\pgfmathsetmacro{\R}{0.375}
\pgfmathsetmacro{\W}{1.25}
\pgfmathsetmacro{\H}{1.25}
\pgfmathsetmacro{\RISx}{-5.455}
\pgfmathsetmacro{\RISy}{3.6}

\filldraw[fill=apricot, draw=apricot] (\RISx-2.11,\RISy-1.65) rectangle (\RISx+1.64,\RISy+1.18);
\foreach \i in {-4,...,3} {
    \foreach \j in {-3,...,2} {
    \node [draw, copper, line width=1,trapezium, trapezium left angle=90, trapezium right angle=90, minimum width=\R cm,outer sep = 0pt,fill=apricot, rotate = 0] at (\RISx+\i*\W*\R,\RISy+\j*\H*\R) {};
    \draw[copper, line width=1] (\RISx+\i*\W*\R,\RISy+\j*\H*\R+0.5*\R) -- (\RISx+\i*\W*\R,\RISy+\j*\H*\R+0.15*\R);
    \draw[copper, line width=1] (\RISx+\i*\W*\R,\RISy+\j*\H*\R-0.5*\R) -- (\RISx+\i*\W*\R,\RISy+\j*\H*\R-0.15*\R);
    \draw[copper, line width=1] (\RISx+\i*\W*\R-0.25*\R,\RISy+\j*\H*\R-0.15*\R) -- (\RISx+\i*\W*\R+0.25*\R,\RISy+\j*\H*\R-0.15*\R);
    \draw[copper, line width=1] (\RISx+\i*\W*\R-0.25*\R,\RISy+\j*\H*\R+0.15*\R) -- (\RISx+\i*\W*\R+0.25*\R,\RISy+\j*\H*\R+0.15*\R);}}

\path (-11,1)  pic[scale=0.5,color=black] {relay={TX}};

\node[text width=5.5em]  (EH)  at (-0.5,2.5) [startstop] {EH device};

\node at (-2.6, 6.35) {Power transfer via cable};

\node[copper] at (-5.7, 1.5) {\Large Reflection mode};

\node[text width=5.5em]  (PS)  at (-5.69,6) [startstop] {Power\\\vspace{1mm}supply};

\draw[arrow, line width=0.5mm, mycolor2] (EH.north) |- (PS.east);

\draw[arrow, line width=0.5mm, mycolor2] (PS.south) --++ (0,-0.7);

\node[black] at (-9, 1) {$\bm{h}\delequal \begin{bmatrix}
h_{1} \\
\vdots \\
h_{N}
\end{bmatrix}$};

\node[black] at (-2.7, 1.5) {$\bm{g}\delequal \begin{bmatrix}
g_{1} \\
\vdots \\
g_{N}
\end{bmatrix}$};


\node[black] at (-1.53, 4.6) {\small Measurements};


\draw[arrow, dotted, line width=0.5mm, mycolor2] (-0.9, 3) --++ (0, 1.22) --++ (-1.63,0);


\pgfmathsetmacro{\ControllerX}{2.52}
\pgfmathsetmacro{\ControllerY}{3.15}
\begin{pgflowlevelscope}{\pgftransformscale{0.75}}
\draw[fill=gray] (\ControllerX-7.5,\ControllerY+1.73) rectangle (\ControllerX-6,\ControllerY+3.23) node[white] at (\ControllerX-6.75,\ControllerY+2.30) {\small Controller};
\node[white] at (\ControllerX-6.75,\ControllerY+2.69) {\small RIS};
\foreach \i in {1,...,14} {
    \draw[] (\ControllerX-6,\ControllerY+3.23-0.1*\i)-- (\ControllerX-5.9,\ControllerY+3.23-0.1*\i);
    \draw[] (\ControllerX-7.5,\ControllerY+3.23-0.1*\i)-- (\ControllerX-7.6,\ControllerY+3.23-0.1*\i);
    \draw[] (\ControllerX-6-0.1*\i,\ControllerY+3.23)-- (\ControllerX-6-0.1*\i,\ControllerY+3.33);
    \draw[] (\ControllerX-6-0.1*\i,\ControllerY+1.73)-- (\ControllerX-6-0.1*\i,\ControllerY+1.63);}
\end{pgflowlevelscope}

\begin{polaraxis}[at={(-2.55in,0.2in)},
		grid=none,
		width=6.6cm,
		xtick=\empty,
		ytick=\empty,
		rotate=-21,
		axis line style={draw=none},
		tick style={draw=none}]
		\addplot+[mark=none, fill=mycolor3, opacity=0.4, line width=0.75pt, color = mycolor3, domain=-180:180,samples=600] 
		{0.08*abs(sin(4*x*3.141592)/(3.141592*3.141592/180*x))}; 
\end{polaraxis}

\begin{polaraxis}[at={(-4.61in,0.2in)},
		grid=none,
		width=5.5cm,
		xtick=\empty,
		ytick=\empty,
		rotate=15,
		axis line style={draw=none},
		tick style={draw=none}]
		\addplot+[mark=none, fill=mycolor3, opacity=0.4, line width=0.75pt, color = mycolor3, domain=-180:180,samples=600] 
		{0.1*abs(sin(0.5*x*3.141592)/(3.141592*3.141592/180*x))}; 
\end{polaraxis}

\end{tikzpicture}}%
             \label{fig:Indirect_RIS}}
          \caption{Energy harvesting schemes.}
          \label{fig:direct_Indirect_RIS}
\end{figure*}

The \gls*{eh} device is equipped with an \gls*{rf} power sensor to measure the power level of a signal. There are different types of \gls*{rf} sensors. It might be based on semiconductor devices like diodes that respond to changes in input power. The signal is then passed through an \gls*{adc}, which converts the analog \gls*{rf} signal into a digital representation that can be processed by digital circuits and microcontrollers.

The indirect approach is also applicable for \gls*{star} \gls*{ris} \cite{9570143} as long as the harvesting unit is on one side of the \gls*{ris}.

\subsection{Problem Description for Energy Harvesting}
We consider a scenario of \gls*{eh} from an ambient \gls*{rf} source\footnote{The proposed model is valid for multiple transmitters under one of these conditions: 1) During the power measurements for the phase update of an element, the amplitude of the transmitted signals from multiple \gls*{rf}  sources remain constant. 2) All \gls*{rf} sources transmit the same narrowband signal.} by an \gls*{ris} with no prior  \gls*{csi}, and there is no coordination between the \gls*{ris} and the \gls*{rf} transmitter. We assume that the transmit power and the location of the transmitter are unknown.\footnote{This scenario is more practical (compared to a scenario with coordination between the transmitter and the \gls*{ris}) as the transmitter may not be designed to coordinate the \gls*{csi} with the \gls*{ris} or only do it when it requests that the \gls*{ris} is supporting its data transmissions.}

For both the considered schemes, the measured \gls*{rf} power by the \gls*{ris} (\gls*{eh} device) has the following expression
\begin{equation}\label{eq:Power_func}
    \mathrm{Y}=\left|\sum_{n=1}^{N} z_{n}e^{j\vartheta_{n}}+\mathrm{W}\right|^{2},
\end{equation}
where $z_{n}\in\mathbb{C}$ for each $0\leq n\leq N$. In \eqref{eq:Power_func}, $z_{n}$ not only includes all channel gains between transmitters and the energy harvester (except the adjustable phase shift $\vartheta_{n}$), but also takes into account the transmission power. Also, $\vartheta_{n}\in[0, 2\pi)$ is the adjustable phase shift of the $n$th element of the \gls*{ris}. The random variables $\mathrm{W}$ and $\mathrm{Y}$ represent the received noise and the measured received power, respectively \cite{9721205,haykin2008communication}. The measured received power can be obtained from the reading of the power in the \gls*{eh} device.\footnote{Power measurements can be obtained from the input of the \gls*{eh} unit using a circuit such as a voltmeter. Alternatively, power measurements can be applied at the output of the \gls*{eh} unit by compensating for the nonlinear harvesting conversion efficiency.} 

The key difference between the direct and indirect \gls*{eh} schemes is in how the parameters in \eqref{eq:Power_func} are selected. For all $1\leq n\leq N$, we have
\begin{equation} \label{eq:z_definition}
    z_{n}\delequal\begin{cases}
    \sqrt{P_{\text{t}}}h_{n}, \qquad &\text{Direct EH},\\
    \sqrt{P_{\text{t}}}h_{n}g_{n}, \qquad &\text{Indirect EH},
    \end{cases}
\end{equation}
where $h_{n}\in \mathbb{C}$ and $g_{n}\in \mathbb{C}$ are the channel gains from the transmitter to the \gls*{ris} element $n$ and from the \gls*{ris} element $n$ to the \gls*{eh} device in the indirect case, respectively. The transmit power is denoted by $P_{\text{t}}$.

We assume no \gls*{csi} is available at the \gls*{ris} controller (i.e., $z_{n}$ is unavailable to the \gls*{ris}). This scenario is more general compared to a scenario with coordination between the transmitters and the \gls*{ris}, since the transmitter might lack a protocol to reach the \gls*{ris} and can even be unaware of its existence.\footnote{
In general, the transmitter doesn't need to be the \gls*{bs}. The BS can request a particular phase configuration for \gls*{ris}-aided data transmission, but this is not what is needed to enable energy harvesting.
}

The values of $\left\{z_{n}\right\}_{n=1}^{N}$ depend on the geometry and propagation environment and can be estimated by measuring amplitude and phase using \gls*{rf} receiver circuits. Since the \gls*{ris} lack such \gls*{rf} chains, the \gls*{ris} cannot estimate the amplitudes and phases. Hence, the values of $\left\{z_{n}\right\}_{n=1}^N$ will remain unknown to the \gls*{ris}. On the other hand, with power measurement at the \gls*{eh} device, the \gls*{ris} can measure the combined power from all elements for any feasible phase configuration. We will utilize such power measurements in this paper.
The general optimization problem is to maximize the received \gls*{rf} power\footnote{Note that the harvested \gls*{rf} power is a nonlinear function of the received \gls*{rf} power that is called the conversion efficiency function. We consider the problem of choosing optimal phase shift $\bm{\vartheta}^{\star}$ for the \gls*{ris} elements to maximize the harvesting \gls*{rf} power. However, since the conversion efficiency function is generally an increasing function, the optimal solution for maximization of the harvesting \gls*{rf} power is equivalent to the one for the maximization of the measured received \gls*{rf} power.} by finding proper phase shifts  $\bm{\vartheta}{\delequal}\left[\vartheta_{1},\dots,\vartheta_{N}\right]^{\mathsf{T}}$ for the \gls*{ris}.

\section{Proposed Phase Estimation Algorithms} \label{Sec:Single_phase}

In this section, we will explore how we can maximize the received power in \eqref{eq:Power_func} with respect to the \gls*{ris} phase configuration $\bm{\vartheta}$. For given values of $z_{1}$, $\dots$, $z_{N}$, the closed-form optimal phase configuration is known. However, the $z$-coefficients are unavailable in our setup. Hence, we will explore how we can estimate them using a sequence of power measurements. Before considering the general scheme in the next section, we start with the basic single-phase scenario, where we maximize the received power with respect to a single phase coefficient. We assume that $ze^{j\vartheta}$ is one of the terms in \eqref{eq:Power_func}, and $z_{0}$ is the summation of all other terms. Therefore, the received power becomes
\begin{align}\label{eq:Y_def}
    \mathrm{Y}&\delequal \left|z_{0}+z e^{j\vartheta}+\mathrm{W}\right|^{2},
\end{align}
where $\mathrm{W}\sim \mathcal{CN}{\left(0,\sigma^{2}\right)}$ and $\mathrm{Y}$ is a random variable due to the noise. We know that $\mathbb{E}{\left(\mathrm{Y}^{2}\right)}= \left|z_{0}+z e^{j\vartheta}\right|^{2}+\sigma^{2}$. Therefore, the optimal phase shift that maximizes the mean received power is
\begin{equation}\label{eq:opt_theta_3}
    \vartheta^{\star}=\arg{\left(z_{0}\right)}-\arg{\left(z\right)}.
\end{equation}

If we measure the received power for $L$ different phase configurations from the set of measurement phases $\Phi\delequal\left\{\varphi_{1}, \dots, \varphi_{L}\right\}$, the measured received powers become
\begin{align}\label{eq:Y_def_l}
    \mathrm{Y}_{l}&\delequal \left|z_{0}+z e^{j\varphi_{l}}+\mathrm{W}_{l}\right|^{2} \nonumber\\ 
    &= \left(\sqrt{\bm{x}^{\mathsf{T}}\bm{a}_{l}}+\widetilde{\mathrm{W}}_{\text{r},l}\right)^{2}+\widetilde{\mathrm{W}}_{\text{i},l}^{2}, \quad \forall 1\leq l\leq L,
\end{align}
where
\begin{equation}\label{eq:def_x}
    \bm{x}\delequal \left[\left|z_{0}\right|^{2}+\left|z\right|^{2}, 2\operatorname{Re}\!{\left(z_{0}z^{*}\right)}, 2\operatorname{Im}\!{\left(z_{0}z^{*}\right)}\right]^{\mathsf{T}},
\end{equation}
$\bm{a}_{l}\delequal\left[1, \cos{\left(\varphi_{l}\right)},  \sin{\left(\varphi_{l}\right)}\right]^{\mathsf{T}}$, $\mathrm{W}_{l}\sim\mathcal{CN}{\left(0,\sigma^{2}\right)}$, and $\widetilde{\mathrm{W}}_{\text{r},l}$ and $\widetilde{\mathrm{W}}_{\text{i},l}$ are \gls*{iid} random variables following $\mathcal{N}{\left(0, \sigma^{2}/2\right)}$. 

We observe from \eqref{eq:Y_def_l} that the conditional distribution of $\mathrm{Y}_{l}$ given the channel is non-central Chi-squared distributed with two degrees of freedom. Also, $\mathrm{Y}_{1}, \mathrm{Y}_{2}, \dots, \mathrm{Y}_{L}$ are independent random variables. Therefore, $\bm{\mathrm{Y}}\delequal\left[\mathrm{Y}_{1}, \mathrm{Y}_{2}, \dots, \mathrm{Y}_{L}\right]^{\mathsf{T}}$ has the mean 
\begin{equation}\label{eq:Mean_Y}
    \bm{\mu}\delequal\mathbb{E}{\left(\bm{\mathrm{Y}}\right)}= \bm{A}\bm{x}+\sigma^{2}\bm{1},
\end{equation}
and covariance
\begin{equation}\label{eq:Cov_Y}
    \bm{\Sigma}\delequal\mathbb{C}\mathrm{ov}{\left(\bm{\mathrm{Y}}\right)}= 2\sigma^{2}\operatorname{diag}\!{\left(\bm{A}\bm{x}\right)+\sigma^{4}\bm{I}_{L}},
\end{equation}
where 
\begin{equation}\label{eq:Matrix_A}
    \bm{A}\delequal [\bm{a}_{1}, \dots, \bm{a}_{L}]^{\mathsf{T}}.
\end{equation}
Consequently, $\mathrm{Y}_{l}$ has the conditional \gls*{pdf} 
\begin{align}\label{eq:pdf_y}
    p_{\mathrm{Y}_{l}|\bm{\mathrm{X}}}{\left(y|\bm{x}\right)}=&\frac{1}{\sigma^{2}}\exp{\left(-\frac{y+\bm{x}^{\mathsf{T}}\bm{a}_{l}}{\sigma^{2}}\right)}I_{0}{\left(\frac{2\sqrt{\bm{x}^{\mathsf{T}}\bm{a}_{l}y}}{\sigma^{2}}\right)},
\end{align}
for all $1\leq l\leq L$, where $I_{0}{\left(\cdot\right)}$ is the zero-order modified Bessel function of the first kind.

The noncentral Chi-squared \gls*{pdf} is a log-concave function \cite{finner1997log}. Moreover, a composition with affine mapping preserves the concavity \cite{boyd_2004_co}. Hence, $p_{\mathrm{Y}_{l}|\bm{\mathrm{X}}}{\left(y|\bm{x}\right)}$ in \eqref{eq:pdf_y} is log-concave with respect to $\bm{x}$.
If we conduct multiple measurements for the measurement phases from $\Phi$, then we will have the joint \gls*{pdf}
\begin{align}\label{eq:joint_y}
    p_{\bm{\mathrm{Y}}|\bm{\mathrm{X}}}{\left(\bm{y}|\bm{x}\right)}=&\prod_{l=1}^{L}p_{\mathrm{Y}_{l}|\bm{\mathrm{X}}}{\left(y_{l}|\bm{x}\right)}.
\end{align}
The log-concavity is preserved under multiplication. Hence, $p_{\bm{\mathrm{Y}}|\bm{\mathrm{X}}}{\left(\bm{y}|\bm{x}\right)}$ in \eqref{eq:joint_y} is log-concave with respect to $\bm{x}$.

We can estimate $\bm{x}$ using a \gls*{ml} estimator \cite{poor1998introduction} as
\begin{align}\label{eq:x_hat}
    \widehat{\bm{x}} &\delequal \underset{\bm{x}\in \mathcal{D}_{x}}{\mathrm{argmax}}\ p_{\bm{\mathrm{Y}}|\bm{\mathrm{X}}}{\left(\bm{y}|\bm{x}\right)}\nonumber\\
    &=\underset{\bm{x}\in \mathcal{D}_{x}}{\mathrm{argmin}}\!\sum_{l=1}^{L}\left(\!\frac{\bm{x}^{\mathsf{T}}\bm{a}_{l}}{\sigma^{2}}-\log{\!\left(I_{0}{\left(\!\frac{2\sqrt{\bm{x}^{\mathsf{T}}\bm{a}_{l}y_{l}}}{\sigma^{2}}\right)}\!\right)}\!\right),
\end{align}
where  $\mathcal{D}_{x}$ is the domain of feasible values for $\bm{x}$. Notice that, in \eqref{eq:x_hat}, we took the logarithm and reversed the sign. Since $p_{\bm{\mathrm{Y}}|\bm{\mathrm{X}}}{\left(\bm{y}|\bm{x}\right)}$ is log-concave with respect to $\bm{x}$, the objective function in \eqref{eq:x_hat} becomes a convex function with respect to $\bm{x}$. 

If we denote the optimal solution to problem \eqref{eq:x_hat} as $\widehat{\bm{x}}$, an estimate of the optimal phase shift can be calculated as
\begin{equation} \label{eq:est_vartheta}
\widehat{\vartheta}=\arg{\left(\widehat{x}_{2}+j\widehat{x}_{3}\right)}.
\end{equation}
It should be noted that $\widehat{\vartheta}$ is not the \gls*{ml} estimate of the optimal phase shift.\footnote{The optimization problem for the \gls*{ml} estimation of the optimal phase shift is non-convex, hence it was not considered in this analysis.}

\subsection{ML Estimator}
To find the \gls*{ml} estimate of $\bm{x}$, we need to solve the following convex optimization problem:
\begin{subequations} \label{eq:Exact_problem}
    \begin{alignat}{3}
    &\minimize_{\bm{x}} &\quad&  \sum_{l=1}^{L}\left(\!\frac{\bm{x}^{\mathsf{T}}\bm{a}_{l}}{\sigma^{2}}-\log{\!\left(I_{0}{\left(\!\frac{2\sqrt{\bm{x}^{\mathsf{T}}\bm{a}_{l}y_{l}}}{\sigma^{2}}\right)}\!\right)}\!\right) &\label{eq:optProb1_a}\\
    &\operatorname{subject\ to} &      & x_{1}\geq \sqrt{x_{2}^{2}+x_{3}^{2}}, \quad &  \label{eq:constraint1_b}
    \end{alignat}
\end{subequations}
where \eqref{eq:constraint1_b} is due to the definition of $\bm{x}$, and it is a second-order cone constraint; that is, a convex constraint \cite{boyd_2004_co}. Moreover, \eqref{eq:constraint1_b} implies $\bm{x}^{\mathsf{T}}\bm{a}_{l}\geq0$ for all $1\leq l\leq L$. 

One can solve the problem \eqref{eq:Exact_problem} using standard convex optimization algorithms, such as the interior-point method, and then use \eqref{eq:est_vartheta} to estimate the phase shift.\footnote{At high \gls*{snr}, one can use the approximation $I_{0}{\left(x\right)}\approx e^{x}/\sqrt{2\pi x}$ and substitute it in \eqref{eq:optProb1_a}. The approximation makes the optimization problem simpler and numerically tractable, as the logarithmic and exponential terms cancel each other out.}

\subsection{Linear Estimator}
In this section, we propose a linear least squares estimator of $\bm{x}$. From \eqref{eq:Y_def_l}, the squared difference between the received power $\bm{y}$ and the noiseless signal  can be expressed as
$\sum_{l=1}^{L}(y_{l}-\bm{x}^{\mathsf{T}}\bm{a}_{l})^{2}$.
Hence, the least squares estimate is obtained as 
\begin{align}\label{eq:Linear_est}
    \widehat{\bm{x}}&\delequal\underset{\bm{x}}{\mathrm{argmin}}\sum_{l=1}^{L}\left(y_{l}-\bm{x}^{\mathsf{T}}\bm{a}_{l}\right)^{2}=\bm{A}^{\dagger}\bm{y},
\end{align}
where $\bm{A}^{\dagger}$ is the Moore–Penrose inverse of $\bm{A}$ defined as $\bm{A}^{\dagger} \delequal \left(\bm{A}^{\mathsf{T}}\bm{A}\right)^{-1}\bm{A}^{\mathsf{T}}$.

The performance of the linear estimator in \eqref{eq:Linear_est} depends on the choice of the matrix $\bm{A}$, which in turn is a function of the set of measurement phases $\Phi$. Our goal is to determine the best $\Phi$ that minimizes the \gls*{mse} $\mathsf{MSE}\delequal \mathbb{E}{(\|\widehat{\bm{\mathrm{X}}}-\bm{x}\|^{2})}$ of the estimator.

To compute the \gls*{mse} of the estimator for an arbitrary matrix $\bm{A}$, we need to find the expected values of $\mathbb{E}{\left(\widehat{\bm{\mathrm{X}}}\right)}$ and $\mathbb{E}{\left(\widehat{\bm{\mathrm{X}}}^{\mathsf{T}}\widehat{\bm{\mathrm{X}}}\right)}$. The mean of $\widehat{\bm{\mathrm{X}}}$ is
\begin{align} \label{eq:Mean_estimator}
    \mathbb{E}{\left(\widehat{\bm{\mathrm{X}}}\right)}
    =\bm{A}^{\dagger}\mathbb{E}{\left(\bm{\mathrm{Y}}\right)}
=\bm{A}^{\dagger}\left(\bm{A}\bm{x}+\sigma^{2}\bm{1}\right)
    =\bm{x}+\sigma^{2}\bm{A}^{\dagger}\bm{1},
\end{align}
where it can be easily demonstrated that $\bm{A}^{\dagger}\bm{1} = [1, 0, 0]^{\mathsf{T}}$. Thus, the linear estimator proposed in \eqref{eq:Linear_est} is biased for $x_1$ and unbiased for $x_2$ and $x_3$. Despite the bias in $\widehat{x}_1$, the phase estimation is not affected as the optimal phase shift is a function of $x_2$ and $x_3$.

Furthermore, we have
\begin{align} \label{eq:var_estimator}
    \mathbb{E}{\left(\widehat{\bm{\mathrm{X}}}^{\mathsf{T}}\widehat{\bm{\mathrm{X}}}\right)}&=\mathbb{E}{\left(\left(\bm{A}^{\dagger}\bm{\mathrm{Y}}\right)^{\mathsf{T}}\left(\bm{A}^{\dagger}\bm{\mathrm{Y}}\right)\right)}=\mathbb{E}{\left(\bm{\mathrm{Y}}^{\mathsf{T}}\left(\bm{A}^{\dagger}\right)^{\mathsf{T}}\bm{A}^{\dagger}\bm{\mathrm{Y}}\right)}\nonumber\\
    &=\operatorname{tr}\!{\left(\left(\bm{A}^{\dagger}\right)^{\mathsf{T}}\bm{A}^{\dagger}\bm{\Sigma}\right)}+\bm{\mu}^{\mathsf{T}}\left(\bm{A}^{\dagger}\right)^{\mathsf{T}}\bm{A}^{\dagger}\bm{\mu}\nonumber\\
    &=\operatorname{tr}\!{\left(\left(\bm{A}^{\dagger}\right)^{\mathsf{T}}\!\!\bm{A}^{\dagger}\bm{\Sigma}\right)}+\bm{x}^{\mathsf{T}}\bm{x}+2\sigma^{2}x_{1}+\sigma^4\nonumber\\
    &=2\sigma^{2}\operatorname{tr}\!{\!\left(\!\left(\bm{A}^{\dagger}\right)^{\!\mathsf{T}}\!\!\!\bm{A}^{\dagger}\operatorname{diag}\!{\left(\bm{A}\bm{x}\right)}\!\right)}\!+\!\sigma^{4}\operatorname{tr}\!{\!\left(\!\left(\bm{A}^{\dagger}\right)^{\!\mathsf{T}}\!\!\!\bm{A}^{\dagger}\!\right)}\nonumber\\
    &\hspace{4mm} +\bm{x}^{\mathsf{T}}\bm{x}+2\sigma^{2}x_{1}+\sigma^4,
\end{align}
where $\bm{\mu}$ and $\bm{\Sigma}$ are defined in \eqref{eq:Mean_Y} and \eqref{eq:Cov_Y}, respectively. 

Using \eqref{eq:Mean_estimator} and \eqref{eq:var_estimator}, the \gls*{mse} for a given $\bm{x}$ becomes
\begin{align}\label{eq:MSE}
    \mathsf{MSE} &\delequal \mathbb{E}{\left(\left\|\widehat{\bm{\mathrm{X}}}-\bm{x}\right\|^{2}\right)}=\mathbb{E}{\left(\left(\widehat{\bm{\mathrm{X}}}-\bm{x}\right)^{\mathsf{T}}\left(\widehat{\bm{\mathrm{X}}}-\bm{x}\right)\right)}\nonumber\\
    &=\mathbb{E}{\left(\widehat{\bm{\mathrm{X}}}^{\mathsf{T}}\widehat{\bm{\mathrm{X}}}\right)}-2\bm{x}^{\mathsf{T}}\mathbb{E}{\left(\widehat{\bm{\mathrm{X}}}\right)}+\bm{x}^{\mathsf{T}}\bm{x}\nonumber\\
    &=2\sigma^{2}\!\operatorname{tr}\!\!{\left(\!\left(\bm{A}^{\!\dagger}\right)^{\!\!\mathsf{T}}\!\!\!\bm{A}^{\!\dagger}\!\operatorname{diag}\!{\left(\bm{A}\bm{x}\right)}\!\right)}\!\!+\!\sigma^{4}\operatorname{tr}\!\!{\left(\!\left(\bm{A}^{\!\dagger}\right)^{\!\!\mathsf{T}}\!\!\!\bm{A}^{\!\dagger}\!\right)}\!+\!\sigma^{4}\!.
\end{align}
 Since the \gls*{ris} doesn't know the values of $z_{0}$ and $z$, from the \gls*{ris} perspective, $\bm{x}$ is a random variable and therefore should be denoted as $\bm{\mathrm{X}}$. The objective is to find a matrix $\bm{A}$ that minimizes the expected \gls*{mse} with respect to $\bm{\mathrm{X}}$.  Assuming that the random variables $\mathrm{X}_{2}$ and $\mathrm{X}_{3}$ are independent and follow a zero-mean symmetric distribution,\footnote{As $z_0$ and $z$ are unknown to the \gls*{ris}, from the \gls*{ris} perspective, $z_0$ and $z$ are random variables and the relative angle between them follows a symmetric distribution with a mean of zero.} we can express the following for any matrix $\bm{A}$ with the structure given in \eqref{eq:Matrix_A}
\begin{equation}
    \mathbb{E}_{\bm{\mathrm{X}}}{\left(\operatorname{diag}\!{\left(\bm{A}\bm{\mathrm{X}}\right)}\right)}= \mathbb{E}{\left(\mathrm{X}_{1}\right)}\bm{I}_{L}.
\end{equation} 
Therefore,
\begin{align}
    \mathbb{E}_{\bm{\mathrm{X}}}{\left(\mathsf{MSE}\right)} &=\sigma^{2}\!\left(\!\operatorname{tr}\!{\left(\left(\bm{A}^{\dagger}\right)^{\!\mathsf{T}}\!\!\!\bm{A}^{\dagger}\right)}\left(2\mathbb{E}{\left(\mathrm{X}_{1}\right)}\!+\!\sigma^{2}\right)\!+\!\sigma^{2}\!\right),
\end{align}
where based on the definition of $\bm{\mathrm{X}}$ in \eqref{eq:def_x}, we have $\mathbb{E}{\left(\mathrm{X}_{1}\right)}\geq 0$. Hence, to minimize $\mathbb{E}_{\bm{\mathrm{X}}}{\left(\mathsf{MSE}\right)}$ with respect to $\bm{A}$, we should minimize $\operatorname{tr}\!{\left(\!\left(\bm{A}^{\dagger}\right)^{\mathsf{T}}\!\!\bm{A}^{\dagger}\!\right)}$ with respect to $\bm{A}$.

We can decompose $\bm{A}$ using the \gls*{svd} as $\bm{A}=\bm{U}\bm{D}\bm{V}^{\mathsf{T}}$, where $\bm{U}$ is an $L\times 3$ real semi-unitary matrix, $\bm{D}\delequal \operatorname{diag}\!{\left(\left[d_{1}, d_{2}, d_{3}\right]^{\mathsf{T}}\right)}$ is a $3\times 3$ diagonal matrix with non-negative real numbers on the diagonal, $\bm{V}$ is an $3\times 3$ real unitary matrix. Thus, we have
\begin{align} \label{eq:tr_expression}
    \operatorname{tr}\!{\left(\left(\bm{A}^{\dagger}\right)^{\mathsf{T}}\!\!\bm{A}^{\dagger}\right)}&\!=\!\operatorname{tr}\!{\left(\left(\bm{A}^{\mathsf{T}}\bm{A}\right)^{-1}\right)}\!=\!\operatorname{tr}\!{\left(\bm{D}^{-2}\right)}\!=\!\sum_{i=1}^{3}\frac{1}{d_{i}^{2}}.
\end{align}
To minimize the expression in \eqref{eq:tr_expression}, we state the following theorem.
\begin{thm}\label{Theorem_1}
    Assuming $L\geq 3$, and $\bm{\varphi}=\left(\varphi_{1}, \dots, \varphi_{L}\right)$,  and $\bm{A}$ be an $L\times 3$ matrix such that $\left[\bm{A}\right]_{i,1}=1$, $\left[\bm{A}\right]_{i,2}=\cos{\left(\varphi_{i}\right)}$, and $\left[\bm{A}\right]_{i,3}=\sin{\left(\varphi_{i}\right)}$ for all $1\leq i\leq L$. If $\bm{A}$ has nonzero singular values of $d_{1}$, $d_{2}$, $d_{3}$, then
    \begin{align}\label{eq:phi_hat_a}
    \bm{\varphi}^{\star} &\!\delequal\! \underset{\bm{\varphi}}{\mathrm{argmin}}\!\sum_{i=1}^{3}\!\frac{1}{d_{i}^{2}}\!=\!\left[\varphi_{0}, \varphi_{0}\!+\!\tfrac{2\pi}{L}, \ldots, \varphi_{0}\!+\!\tfrac{2\pi\left(L\!-\!1\right)}{L}\right]^{\!\mathsf{T}}\!\!,
    \end{align}
    where $\varphi_{0}\in\mathbb{R}$. Additionally, $d_{1}^{\star}=\sqrt{L}$, $d_{2}^{\star}=d_{3}^{\star}=\sqrt{L/2}$. 
\end{thm}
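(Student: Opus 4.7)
The plan is to bypass the original optimization over $\bm{\varphi}$ and instead solve an auxiliary problem purely in the singular values, then check that equally-spaced phases saturate the resulting bound. First I would observe the universal trace identity $\operatorname{tr}(\bm{A}^{\mathsf{T}}\bm{A}) = \sum_{i=1}^{L}(1+\cos^{2}\varphi_{i}+\sin^{2}\varphi_{i}) = 2L$, so that $d_{1}^{2}+d_{2}^{2}+d_{3}^{2}=2L$ for every choice of $\bm{\varphi}$. Thus the feasible set of singular value triples lies on a simplex of fixed mass, and the residual freedom in $\bm{\varphi}$ only affects how the mass is distributed.

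The crucial second observation is a lower bound on the largest singular value coming from the structure of the first column. Since that column is the all-ones vector $\bm{1}$, the first component of $\bm{A}^{\mathsf{T}}\bm{1}$ equals $L$, so $\|\bm{A}^{\mathsf{T}}\bm{1}\|^{2} \geq L^{2}$, and consequently $d_{\max}^{2} \geq \|\bm{A}^{\mathsf{T}}\bm{1}\|^{2}/\|\bm{1}\|^{2} \geq L$ by the operator-norm characterization of the top singular value. I expect this to be the main obstacle, since it is precisely the ingredient that rules out the naive unconstrained optimum $d_{i}^{2}=2L/3$ (which would give the smaller value $\sum 1/d_{i}^{2}=9/(2L)$) but is not realizable for matrices with the prescribed first column.

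Given the trace constraint and (without loss of generality) $d_{1}^{2}=d_{\max}^{2}\geq L$, I would proceed in two nested stages. For fixed $d_{1}^{2}=u\in[L,2L)$, the AM-HM inequality applied to the pair $(d_{2}^{2},d_{3}^{2})$ with sum $2L-u$ yields $1/d_{2}^{2}+1/d_{3}^{2}\geq 4/(2L-u)$, with equality iff $d_{2}^{2}=d_{3}^{2}=(2L-u)/2$. Substituting back reduces the problem to minimizing the scalar function $g(u)=1/u+4/(2L-u)$ on $[L,2L)$. A direct derivative computation locates the unique interior critical point of $g$ at $u=2L/3$, which is strictly less than $L$, so $g$ is strictly increasing on the feasible interval and achieves its minimum at $u=L$ with value $g(L)=5/L$. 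Correspondingly, the optimal singular values are $d_{1}^{\star 2}=L$ and $d_{2}^{\star 2}=d_{3}^{\star 2}=L/2$.

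It remains to exhibit a phase configuration that realizes these values. For the equally-spaced choice $\varphi_{i}=\varphi_{0}+2\pi(i-1)/L$ with $L\geq 3$, the standard geometric-series identity $\sum_{k=0}^{L-1} e^{j 2\pi km/L}=0$ for $1\leq m\leq L-1$ gives both $\sum_{i}e^{j\varphi_{i}}=0$ and $\sum_{i}e^{j2\varphi_{i}}=0$. The former zeros the $(1,2)$ and $(1,3)$ off-diagonal entries of $\bm{A}^{\mathsf{T}}\bm{A}$, and the two identities together force the $2\times 2$ lower-right block to equal $(L/2)\bm{I}_{2}$, so that $\bm{A}^{\mathsf{T}}\bm{A}=\operatorname{diag}(L,L/2,L/2)$. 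The singular values then match the bound exactly, completing the proof.
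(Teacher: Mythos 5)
Your proof is correct, and it takes a genuinely different route from the paper's. The paper computes $\bm{A}^{\mathsf{T}}\bm{A}$ explicitly in terms of the first and second circular moments $r_{1}e^{j\delta_{1}}=\tfrac{1}{L}\sum_{l}e^{j\varphi_{l}}$ and $r_{2}e^{j\delta_{2}}=\tfrac{1}{L}\sum_{l}e^{j2\varphi_{l}}$, derives the closed form $\rho=\frac{5-4r_{1}^{2}-r_{2}^{2}}{L\left(1-2r_{1}^{2}-r_{2}^{2}+2r_{1}^{2}r_{2}\cos\left(\delta_{2}-2\delta_{1}\right)\right)}$ from the characteristic polynomial, and then lower-bounds $\rho$ by $5/L$ via a bespoke algebraic inequality ($\alpha x^{2}+(1-\alpha)y^{2}\geq xy^{2}/M$ with the specific choice $\alpha=0.4$), with equality at $r_{1}=r_{2}=0$. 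You instead relax to an optimization over singular values alone, using only two necessary spectral facts: the trace identity $d_{1}^{2}+d_{2}^{2}+d_{3}^{2}=2L$, and the bound $d_{\max}^{2}\geq\left\|\bm{A}^{\mathsf{T}}\bm{1}\right\|^{2}/\left\|\bm{1}\right\|^{2}\geq L$ forced by the all-ones column; then AM--HM on $(d_{2}^{2},d_{3}^{2})$ and a one-variable calculus argument give $\sum_{i}1/d_{i}^{2}\geq 5/L$, and the equally-spaced configuration saturates the bound via the same geometric-series identities the paper uses in its Lemma~3. Your argument is cleaner and more conceptual: it avoids the determinant computation and the unmotivated constant $\alpha=0.4$, and it isolates exactly which structural feature (the fixed first column) prevents the naive balanced optimum $d_{i}^{2}=2L/3$. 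What the paper's route buys is an explicit characterization of the equality set in terms of the phases themselves (vanishing first and second circular moments), whereas yours characterizes it only in terms of the achieved singular values; neither proof establishes that equally-spaced phases are the \emph{unique} minimizers (and indeed they are not, e.g.\ superpositions of regular polygons also satisfy $r_{1}=r_{2}=0$), so you lose nothing relative to the paper on that front. One minor point worth making explicit in a write-up: the AM--HM equality point $d_{2}^{2}=d_{3}^{2}=(2L-u)/2$ is consistent with $d_{1}$ being the largest singular value precisely because $u\geq L>2L/3$, though for the lower bound itself this feasibility check is not needed.
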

\begin{proof}\renewcommand{\qedsymbol}{}
The proof is provided in Appendix \ref{App_T1}.
\end{proof}

The optimal $\bm{A}^{\star}$ can be constructed by substituting the optimal measurement phases obtained from Theorem \ref{Theorem_1} into \eqref{eq:Matrix_A}. Hence, we have
\begin{equation} \label{eq:opt_A_dagger}
    \bm{A}^{\star^\dagger}=\frac{1}{L}\operatorname{diag}\!{\left(\left[1, 2, 2\right]^{\mathsf{T}}\right)}\bm{A}^{\star^{\mathsf{T}}},
\end{equation} 
and from \eqref{eq:est_vartheta}, \eqref{eq:Linear_est}, and \eqref{eq:opt_A_dagger}, we have
\begin{align}\label{eq:noiseless_theta_hat}
    \widehat{\vartheta}=\arg{\left(\sum_{l=1}^{L}y_{l}e^{j{2\pi \left(l-1\right)}/{L}}\right)}.
\end{align}

\begin{rem}
  As the covariance of the observation vector $\bm{\mathrm{Y}}$ is a function of the parameter $\bm{x}$ \eqref{eq:Cov_Y}, the conventional \gls*{blue} does not exist \cite{kay1993fundamentals}. However, one can define a modified \gls*{blue}-based estimator that, instead of minimizing the \gls*{mse}, minimizes the expected \gls*{mse} with respect to the parameter $\bm{\mathrm{X}}$. Assuming that $\mathrm{X}_{2}$ and $\mathrm{X}_{3}$ are independent and follow zero-mean symmetric distributions, the modified \gls*{blue}-based estimator gives the same estimates of $x_{2}$ and $x_{3}$ as \eqref{eq:Linear_est}. Therefore, the phase estimation is not affected as the optimal phase shift is a function of $x_2$ and $x_3$.
\end{rem}

\subsection{Empirical RMSE of the Estimators}\label{RMSE}
In the previous section, the proposed \gls*{ml} estimator in \eqref{eq:Exact_problem} was optimal for the estimation of $\bm{x}$, but, it was sub-optimal for estimating $\vartheta=\arg{\left(x_{2}+j\,x_{3}\right)}$. Additionally, the linear estimator was sub-optimal for estimating $\vartheta$  as it minimizes the \gls*{mse} with respect to $\bm{x}$, not $\vartheta$. 
In this section, we evaluate the \gls*{rmse} of the estimations of the parameter $\vartheta$ for linear and \gls*{ml} estimators using Monte Carlo simulations. We consider the set of measurement phases $\Phi=\left\{0, \frac{2\pi}{3}, \frac{4\pi}{3}\right\}$ for both estimators. For this numerical evaluation, we define the \gls*{snr} as
\begin{equation}
    \mathsf{SNR}\delequal \frac{\left|z_{0}\right|^{2}+\left|z\right|^{2}}{2\sigma^{2}}.
\end{equation}
Without loss of generality, we assume that $z_{0}=1$. Hence, according to \eqref{eq:opt_theta_3}, we have $\vartheta=-\arg{\left(z\right)}$. 
We explore different values for $|z|$, $\vartheta$ (i.e., $-\arg{\left(z\right)}$), and the \gls*{snr} to encompass different parameter scenarios.
To calculate the \gls*{rmse}, we constrain $\widehat{\vartheta}$ within the range $-\pi + \vartheta < \widehat{\vartheta} \leq \pi + \vartheta$.

As seen in Fig.~\ref{fig:RMSE}, for the considered $\Phi$, the \gls*{rmse} of the \gls*{ml} and linear estimators closely track each other and vary with the actual value of the parameter $\vartheta$. Additionally, we observe an increase in the respective \gls*{rmse} values for $|z|=1$, $|z|=3$, and $|z|=10$.

Thus, the linear and \gls*{ml} estimators perform similarly in terms of \gls*{rmse} for the phase estimation. Additionally, the \gls*{ml} estimator has two drawbacks: it requires knowledge of the noise variance and its solution involves complicated modified Bessel functions. Therefore, we consider the linear estimator to develop the multi-phase control algorithm.

\begin{rem}
    Since the proposed estimation techniques do not have any CSI or any a priori distribution of the channel values and only consider the conditional received power distribution, they are general regardless of the channel distributions. 
\end{rem}

 \begin{figure*}[!t]
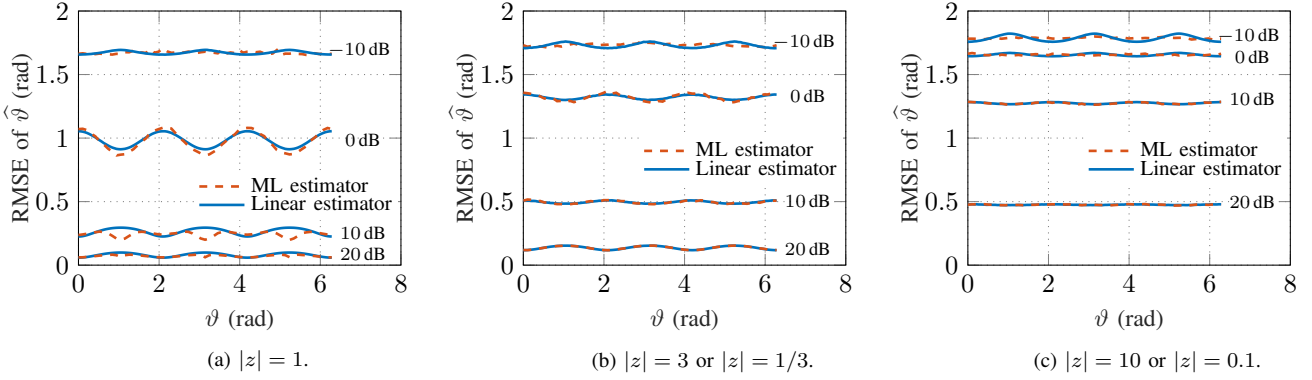

          \captionsetup[subfigure]{oneside,margin={2cm,0cm}}
          \hspace{-5mm}
          \subfloat[$\left|z\right|=1$.]{\input{Figures/RMSE_ML_vs_Linear_1}%
              \label{fig:RMSE_z1}}
          \hfil
          \hspace{-9mm}
          \subfloat[$\left|z\right|=3$ or $\left|z\right|=1/3$.]{\input{Figures/RMSE_ML_vs_Linear_2}%
             \label{fig:RMSE_z2}}
          \hfil
          \hspace{-9mm}
          \subfloat[$\left|z\right|=10$ or $\left|z\right|=0.1$.]{\input{Figures/RMSE_ML_vs_Linear_3}%
             \label{fig:RMSE_z3}}   
          \caption{The impact of the $\vartheta$ on the \gls*{rmse} of $\widehat{\vartheta}$ for \gls*{ml} and linear estimators. We consider $\Phi=\left\{0, \frac{2\pi}{3}, \frac{4\pi}{3}\right\}$ and various values for $|z|$ and  \gls*{snr}.}
          \label{fig:RMSE}
\end{figure*}

\section{Proposed Phase Alignment Schemes}\label{sec:PS}
In this section, we investigate the problem of maximizing the received \gls*{rf} power using a dynamic sequence of power measurements. We propose a model for the \gls*{ris} \gls*{eh} operations and we consider two different scenarios: 1. continuous phase control, where the \gls*{ris} elements can add any continuous phase shifts to the incident \gls*{rf} signal;  2. discrete phase control, where the \gls*{ris} elements can only add phase shifts from a predefined discrete set. For each scenario, we propose an algorithm to find the optimal phase of \gls*{ris} elements.

A simplified model of the operation of an \gls*{ris} with an \gls*{eh} module is shown in Fig.~\ref{fig:Harvesting_Structure}. The network entity manager can assess the network environment, including but not limited to the network demand, the \gls*{snr}, and power measurements from the \gls*{ris}, and determine if the \gls*{ris} should function to provide connectivity or to harvest energy. Within the \gls*{ris} controller, the preferred phase of each element can be determined based on its functionality, and the controller can adjust the phase of the elements accordingly. The energy harvester can be located either outside or inside the \gls*{ris} surface. It sends power measurements to a phase alignment algorithm implemented in the network entity manager module. These values can be used by the network entity manager to decide whether the \gls*{ris} should be in energy harvesting or data transmission mode. After reaching an appropriate phase configuration, the harvested energy can be fed back to the power supply to be used by the \gls*{ris}.

\begin{figure}[t]
    \hspace{-14mm}
    \scalebox{0.86}{
\definecolor{mycolor1}{rgb}{0.85,0.325,0.098}
\definecolor{mycolor2}{rgb}{0.00000,0.44700,0.74100}
\definecolor{mycolor3}{rgb}{0.00000,0.49804,0.00000}

\definecolor{apricot}{rgb}{0.98, 0.81, 0.69}
\definecolor{copper}{rgb}{0.72, 0.45, 0.2}
\definecolor{babyblue}{rgb}{0.54, 0.81, 0.94}
\definecolor{gray}{rgb}{0.5, 0.5, 0.5}
\usetikzlibrary{decorations.pathreplacing}
\linespread{0.6}
\tikzstyle{startstop} = [rectangle, rounded corners, minimum width=1.5cm, minimum height=0.8cm,text centered, draw=black, fill=mycolor1!20, inner sep=-0.3ex]
\tikzstyle{startstop_blue} = [rectangle, rounded corners, minimum width=1.5cm, minimum height=0.8cm,text centered, draw=black, fill=mycolor2!20, inner sep=-0.3ex]
\tikzstyle{startstop_green} = [rectangle, rounded corners, minimum width=1.5cm, minimum height=0.8cm,text centered, draw=black, fill=mycolor3!20, inner sep=-0.3ex]

\tikzstyle{decision} = [diamond, aspect=2,  text centered, draw=black, fill=mycolor1!20, inner sep=-0.3ex]
\tikzstyle{arrow} = [thick,->,>=stealth]

\begin{tikzpicture}[cross/.style={path picture={ 
  \draw[black]
(path picture bounding box.south east) -- (path picture bounding box.north west) (path picture bounding box.south west) -- (path picture bounding box.north east);
}}]



\node[text width=5.5em]  (NE) [startstop] {\scriptsize Network entity\\ manager};
\node[text width=4.5em] (OM) [decision, above of=NE, yshift = 3cm] {\scriptsize Operational mode};
\node[text width=8.5em] (PA1) [startstop_blue, left of=OM, yshift = 1cm, xshift = -2.3cm] {\scriptsize Phase alignment algorithm \\ for data transmission};
\node[text width=8.5em] (PA2) [startstop_green, left of=OM, yshift = -1cm, xshift = -2.3cm] {\scriptsize Phase alignment algorithm \\ for energy harvesting};

\draw[arrow] (NE.east) -- ++(0.5,0) |- coordinate[pos=0.25](m1)(OM.east);
\node[text width=4em]  [black,left=-0.2 of m1] {\scriptsize Operational mode signal};

\draw[arrow, color = mycolor2] (OM.north) |- coordinate[pos=0.65](m2)(PA1.east);
\node[text width=4em]  [black,above=0 of m2] {\scriptsize {\color{mycolor2} Coverage}};
\draw[arrow, color = mycolor3] (OM.south) |- coordinate[pos=0.65](m3)(PA2.east);
\node[text width=4em]  [black,above=0 of m3] {\scriptsize {\color{mycolor3} Energy harvesting}};

\pgfmathsetmacro{\R}{0.25}
\pgfmathsetmacro{\W}{1.25}
\pgfmathsetmacro{\H}{1.25}
\pgfmathsetmacro{\RISx}{-5.455}
\pgfmathsetmacro{\RISy}{0.6}

\filldraw[fill=apricot, draw=apricot] (\RISx-0.805,\RISy-0.8) rectangle (\RISx+1.1015,\RISy+0.8);
\foreach \i in {-2,...,3} {
    \foreach \j in {-2,...,2} {
    \node [draw, copper, line width=1,trapezium, trapezium left angle=90, trapezium right angle=90, minimum width=\R cm,outer sep=0pt,fill=apricot, rotate = 0] at (\RISx+\i*\W*\R,\RISy+\j*\H*\R) {};
    \draw[copper, line width=1] (\RISx+\i*\W*\R,\RISy+\j*\H*\R+0.5*\R) -- (\RISx+\i*\W*\R,\RISy+\j*\H*\R+0.15*\R);
    \draw[copper, line width=1] (\RISx+\i*\W*\R,\RISy+\j*\H*\R-0.5*\R) -- (\RISx+\i*\W*\R,\RISy+\j*\H*\R-0.15*\R);
    \draw[copper, line width=1] (\RISx+\i*\W*\R-0.25*\R,\RISy+\j*\H*\R-0.15*\R) -- (\RISx+\i*\W*\R+0.25*\R,\RISy+\j*\H*\R-0.15*\R);
    \draw[copper, line width=1] (\RISx+\i*\W*\R-0.25*\R,\RISy+\j*\H*\R+0.15*\R) -- (\RISx+\i*\W*\R+0.25*\R,\RISy+\j*\H*\R+0.15*\R);}}

\node[text width=4em] (PC) [startstop, left of=PA2, yshift = -1.8cm, xshift = -1cm] {\scriptsize Phase \\ controller};

\draw[arrow, color = mycolor3] (PA2.west) -| coordinate[pos=0.8](m4)(PC.north);
\node[text width=4em]  [black,right=0 of m4] {\scriptsize RIS element parameters};

\draw[arrow, color = mycolor2] (PA1.west) -- ++(-0.85,0) -- ++(0,-3.3935);

\node[text width=4em] (EH) [startstop, below of=PC, yshift = -0.2 cm] {\scriptsize Energy \\ harvester};

\draw[dashed] (-6.3,0.6) rectangle (-1.63,5.6);
\node at (-5.51, 5.4) {\scriptsize RIS controller};

\draw[arrow] (EH.east) -| coordinate[pos=0.5](m5)(PA2.south);
\node at (-3,-0.2) {\scriptsize Power measurements};

\draw[arrow] (EH.east) -- coordinate[pos=0.8](m6)(NE.west);

\node[text width=4em] (PS) [startstop, left of=PA2, xshift = -3 cm, yshift = -1.5 cm] {\scriptsize Power \\ supply};

\draw[arrow] (EH.west) -| coordinate[pos=0.8](m7)(PS.south);

\draw[arrow] (PS.north) |- (-8.284,2.8);

\draw[dashed, thick] (-8.3,-0.55) rectangle (-1.43,5.8);

\node at (-8, 5.6) {\scriptsize RIS};

\node at (-4.2, 0.75) {\scriptsize \rotatebox{90}{RIS elements}};

\end{tikzpicture}}
    \caption{ The abstract proposed model of the operation of a \gls*{ris} with an \gls*{eh} module.}
    \label{fig:Harvesting_Structure}
\end{figure}
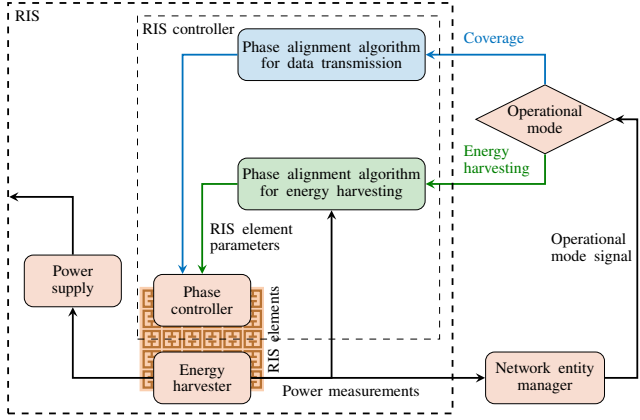

\subsection{Continuous Phase Control}

\subsubsection{Noiseless Scenario}\label{sec:PS_1}
We begin by considering an ideal noiseless scheme, where the \gls*{ris} is capable of adding continuous phase shifts to the incident \gls*{em} waves. In the absence of the noise, the received powers become deterministic. Therefore, there is no need to apply any phase estimation technique. In Theorem \ref{Theorem_2}, we show that with three power measurements, we can determine the optimal phase shift without any error. First, we state Theorem \ref{Theorem_2}, and based on that, we develop our proposed algorithm that requires no a priori \gls*{csi}.

\begin{thm}\label{Theorem_2}
Let $f:\mathbb{R}\rightarrow \mathbb{R}$ be a function of the form
$$f{\left(\vartheta\right)}=\left|z_{0}+ ze^{j\vartheta}\right|^{2},$$ 
where $z_{0}, z\in \mathbb{C}$. Without knowing the explicit values of $z_{0}$ and $z$, the optimal variable $\vartheta^{\star}$  that maximizes $f{\left(\cdot\right)}$ can be computed based on the measurement vector $\bm{y}\delequal\left[f{\left(\varphi_{1}\right)}, f{\left(\varphi_{2}\right)}, f{\left(\varphi_{3}\right)}\right]^\mathsf{T}$ as $\vartheta^{\star}=\arg{\left(x_{2}+j x_{3}\right)}$,
where $\bm{x}\delequal \bm{A}^{-1}\bm{y}$. The matrix $\bm{A}$ is defined as 
\begin{equation} \label{eq:A}
\bm{A} \delequal \begin{bmatrix}
1 & \cos\left(\varphi_{1}\right) &  \sin\left(\varphi_{1}\right)\\
1 & \cos\left(\varphi_{2}\right) &  \sin\left(\varphi_{2}\right)\\
1 & \cos\left(\varphi_{3}\right) &  \sin\left(\varphi_{3}\right)
\end{bmatrix}.
\end{equation}
Note that $\varphi_{1}, \varphi_{2}, \varphi_{3}\in \left[0, 2\pi\right)$ must be selected such that $\det\!{\left(\bm{A}\right)}\neq 0$. 
\end{thm}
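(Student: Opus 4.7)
The plan is to notice that, although $z_0$ and $z$ individually are unknown, the function $f$ depends on them only through three real parameters, so three carefully chosen probe measurements determine the maximizer exactly. First I would expand
\begin{equation}
|z_0+ze^{j\vartheta}|^2 = |z_0|^2+|z|^2+2\operatorname{Re}\!\bigl(z_0 z^* e^{-j\vartheta}\bigr),
\end{equation}
and split the cross term using the identity $\operatorname{Re}((a+jb)e^{-j\vartheta})=a\cos\vartheta+b\sin\vartheta$ to recognise that
\begin{equation}
f(\vartheta) = \bm{x}^{\mathsf{T}}\bigl[1,\cos\vartheta,\sin\vartheta\bigr]^{\mathsf{T}},
\end{equation}
with $\bm{x}$ exactly as defined in the statement. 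In other words, $f$ lives in the three-dimensional real vector space spanned by $\{1,\cos\vartheta,\sin\vartheta\}$, with unknown coordinate vector $\bm{x}$.

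Evaluating this identity at the three probing phases $\varphi_1,\varphi_2,\varphi_3$ and stacking the results yields the linear system $\bm{y}=\bm{A}\bm{x}$, where $\bm{A}$ is the matrix in the theorem. Whenever $\det(\bm{A})\neq 0$, this system is uniquely invertible, so $\bm{x}$ can be recovered from the amplitude measurements without any prior knowledge of $z_0$ or $z$ via $\bm{x}=\bm{A}^{-1}\bm{y}$. To extract $\vartheta^\star$, I would then rewrite the $\vartheta$-dependent part of $f$ as a single sinusoid,
\begin{equation}
x_2\cos\vartheta+x_3\sin\vartheta = \sqrt{x_2^2+x_3^2}\,\cos(\vartheta-\phi),\quad \phi\in\arg(x_2+jx_3),
\end{equation}
whose unique maximizer (modulo $2\pi$) is $\vartheta^\star\in\arg(x_2+jx_3)$, which is the advertised formula. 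The degenerate case $x_2=x_3=0$ (which occurs precisely when $z_0=0$ or $z=0$) makes $f$ constant in $\vartheta$, so any phase is optimal.

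The only point that needs more than a line of argument is the invertibility condition $\det(\bm{A})\neq 0$. Because the last two columns of $\bm{A}$ place the points $(\cos\varphi_i,\sin\varphi_i)$ on the unit circle, $\det(\bm{A})$ equals, up to sign, twice the signed area of the triangle spanned by these three points in $\mathbb{R}^2$. Any three pairwise distinct points on a circle are non-collinear, so $\det(\bm{A})\neq 0$ is equivalent to the three measurement phases being distinct modulo $2\pi$, a benign design constraint that is easily satisfied (and automatically satisfied by, e.g., the equally-spaced choice analysed in Theorem~\ref{Theorem_1}).
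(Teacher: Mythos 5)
Your proof is correct and follows essentially the same route as the paper's: expand $f$ into the affine form $\bm{x}^{\mathsf{T}}[1,\cos\vartheta,\sin\vartheta]^{\mathsf{T}}$, recover $\bm{x}$ by inverting the $3\times 3$ linear system, and read off $\vartheta^{\star}=\arg(x_2+jx_3)$. Your added remarks (the single-sinusoid maximization, the degenerate case $x_2=x_3=0$, and the observation that $\det(\bm{A})\neq 0$ is equivalent to the three phases being pairwise distinct modulo $2\pi$ via the triangle-area interpretation) are correct embellishments rather than a different argument.
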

\begin{proof}\renewcommand{\qedsymbol}{}
The proof is provided in Appendix \ref{App_T2}.
\end{proof}

In general, Theorem \ref{Theorem_2} allows for the computation of the optimal phase shift using only three measurements, without requiring knowledge of the explicit expression of the function. 
Specifically, for the measurement phases $\varphi_{1}=0$, $\varphi_{2}=\pi/2$, and $\varphi_{3}=\pi$, a simple expression for the optimal phase shift can be obtained as follows:
\begin{equation}\label{eq:contin_phase_update}
\vartheta^{\star}=\arg{\left(y_{1}-y_{3}+j\left(2y_{2}-y_{1}-y_{3}\right)\right)}.
\end{equation}

\RestyleAlgo{ruled}
\SetKwComment{Comment}{/* }{ */}
\begin{algorithm}[t!] \label{Alg1}
\caption{The proposed phase-alignment algorithm for power maximization}
\textbf{Input:} The number of RIS elements $N$ and the number of iterations $M$\\
\textbf{Output:} Near-optimal phase vector $\bm{\vartheta}^{\star}$ that maximizes the received power.\\
\textbf{Initialize:} $\bm{\vartheta}\gets \bm{\vartheta}_{0}$, $m\gets 0$, and $\bm{e}_{n}$ is a vector, where the component $n$ is $1$ and all other components are $0$. 

\While{$m < M$}{
$m \gets m + 1$\;
\For{$n\gets 1$ \KwTo $N$}{
    $y_{1}\gets$ the measured power for the phase configuration $\bm{\vartheta}$\;
    $y_{2}\gets$ the measured power for the phase configuration $\bm{\vartheta}+\frac{\pi}{2}\bm{e}_{n}$\;
    $y_{3}\gets $ the measured power for the phase configuration $\bm{\vartheta}+\pi\bm{e}_{n}$\;
    $\bm{\vartheta} \gets \bm{\vartheta}+\arg\!\left(y_{1}-y_{3}+j\left(2y_{2}-y_{1}-y_{3}\right)\right)\bm{e}_{n}$\;
}}
$\bm{\vartheta}^{\star}\gets \bm{\vartheta}$\;
\end{algorithm}

Algorithm \ref{Alg1} is a sequential, iterative phase update algorithm. At each iteration, adjusting the phase of each element requires measuring the received power for three different phase configurations. This is the minimum number of power measurements that can be used for this purpose. The algorithm updates the phase of one element using \eqref{eq:contin_phase_update}, then proceeds to the next element until all $N$ elements have had their phases updated.   In this algorithm, the process is repeated $M$ times, but in practice, it can also be terminated earlier if some additional stopping criterion is satisfied. 

In the inner loop of Algorithm \ref{Alg1}, the number of operations remains constant regardless of the number of elements in the \gls*{ris} (i.e., $N$). Given that both the inner and outer loops iterate $N$ and $M$ times, respectively, the algorithm's complexity can be expressed as $\mathcal{O}(MN)$.

\begin{figure}
    \hspace{-15.6mm}
    \scalebox{0.83}{
\definecolor{mycolor1}{rgb}{0.85,0.325,0.098}
\definecolor{mycolor2}{rgb}{0.00000,0.44700,0.74100}
\definecolor{mycolor3}{rgb}{0.00000,0.49804,0.00000}

\definecolor{apricot}{rgb}{0.98, 0.81, 0.69}
\definecolor{copper}{rgb}{0.72, 0.45, 0.2}
\definecolor{babyblue}{rgb}{0.54, 0.81, 0.94}
\definecolor{gray}{rgb}{0.5, 0.5, 0.5}
\usetikzlibrary{decorations.pathreplacing}
\linespread{0.6}
\tikzstyle{startstop} = [rectangle, rounded corners, minimum width=1.5cm, minimum height=0.8cm,text centered, draw=black, fill=mycolor1!20, inner sep=-0.3ex]
\tikzstyle{startstop_blue} = [rectangle, rounded corners, minimum width=1.5cm, minimum height=0.8cm,text centered, draw=black, fill=mycolor2!20, inner sep=-0.3ex]
\tikzstyle{startstop_green} = [rectangle, rounded corners, minimum width=1.5cm, minimum height=0.8cm,text centered, draw=black, fill=mycolor3!20, inner sep=-0.3ex]

\tikzstyle{decision} = [diamond, aspect=2,  text centered, draw=black, fill=mycolor1!20, inner sep=-0.3ex]
\tikzstyle{arrow} = [thick,->,>=stealth]

\begin{tikzpicture}[cross/.style={path picture={ 
  \draw[black]
(path picture bounding box.south east) -- (path picture bounding box.north west) (path picture bounding box.south west) -- (path picture bounding box.north east);
}}]



\pgfmathsetmacro{\xshift}{4.1}
\pgfmathsetmacro{\xshiftt}{7.1}
\pgfmathsetmacro{\yshiftt}{0.75}

\pgfmathsetmacro{\xshifttt}{10}
\pgfmathsetmacro{\yshifttt}{0.5}

\draw[arrow, color = mycolor2, line width = 1.2] (0,0) -- (1,0.2);
\draw[arrow, color = mycolor2, line width = 1.2] (1,0.2) -- (0.9,0.9);
\draw[arrow, color = mycolor2, line width = 1.2] (0.9,0.9) -- (0.1,0.3);
\node[mycolor2] at (1, 0) {\scalebox{0.9}{$z_{1}$}};
\node[mycolor2] at (1.12, 0.85) {\scalebox{0.9}{$z_{2}$}};
\node[mycolor2] at (0.14, 0.58) {\scalebox{0.9}{$z_{3}$}};

\draw[arrow, color = mycolor2, dashed, line width = 1.2] (\xshift+0,0) -- (\xshift-1.0136,0.1126);
\draw[arrow, color = mycolor2, line width = 1.2] (\xshift-1.0136,0.1126) -- (\xshift-1.1136,0.8126);
\draw[arrow, color = mycolor2, line width = 1.2] (\xshift-1.1136,0.8126) -- (\xshift-1.9136,0.2126);
\draw[arrow, color = mycolor1, dotted, line width = 1.2] (\xshift-1.0136,0.1126) -- (\xshift-1.9136,0.2126);
\node[mycolor2] at (\xshift-0.58, 0.38) {\scalebox{0.9}{$z_{1}e^{j\vartheta^{1}_{1}}$}};
\node[mycolor2] at (\xshift-0.89, 0.78) {\scalebox{0.9}{$z_{2}$}};
\node[mycolor2] at (\xshift-2, 0.4) {\scalebox{0.9}{$z_{3}$}};
\node[mycolor1] at (\xshift-1.8, -0.03) {\scalebox{0.9}{$z_{2}{+}z_{3}$}};

\draw[arrow, color = mycolor2, dashed, line width = 1.2] (\xshiftt+0,\yshiftt+0) -- (\xshiftt-0.6829,\yshiftt- 0.1835);
\draw[arrow, color = mycolor2, line width = 1.2] (\xshiftt-0.6829,\yshiftt- 0.1835) -- (\xshiftt-1.6964,\yshiftt- 0.0709);
\draw[arrow, color = mycolor2, line width = 1.2] (\xshiftt-1.6964,\yshiftt- 0.0709) -- (\xshiftt-2.4964,\yshiftt- 0.6709);
\draw[arrow, color = mycolor1, dotted, line width = 1.2] (\xshiftt-0.6829,\yshiftt- 0.1835) -- (\xshiftt-2.4964,\yshiftt- 0.6709);
\node[mycolor2] at (\xshiftt-1.4, \yshiftt+0.18) {\scalebox{0.9}{$z_{1}e^{j\vartheta^{1}_{1}}$}};
\node[mycolor2] at (\xshiftt-0.42, \yshiftt+0.10) {\scalebox{0.9}{$z_{2}e^{j\vartheta^{1}_{2}}$}};
\node[mycolor2] at (\xshiftt-2.5, \yshiftt-0.39) {\scalebox{0.9}{$z_{3}$}};
\node[mycolor1] at (\xshiftt-2, \yshiftt-0.9) {\scalebox{0.9}{$z_{1}e^{j\vartheta^{1}_{1}}{+}z_{3}$}};

\draw[arrow, color = mycolor2, line width = 1.2] (\xshifttt+0,\yshifttt+0) -- (\xshifttt-0.6829,\yshifttt- 0.1835);
\draw[arrow, color = mycolor2, line width = 1.2] (\xshifttt-0.6829,\yshifttt- 0.1835) -- (\xshifttt-1.6964,\yshifttt- 0.0709);
\draw[arrow, color = mycolor2, dashed, line width = 1.2] (\xshifttt-1.6964,\yshifttt- 0.0709) -- (\xshifttt-2.6956,\yshifttt- 0.1127);
\draw[arrow, color = mycolor1, dotted, line width = 1.2] (\xshifttt+0,\yshifttt+0) -- (\xshifttt-1.6964,\yshifttt- 0.0709);
\node[mycolor2] at (\xshifttt-1.4, \yshifttt-0.36) {\scalebox{0.9}{$z_{1}e^{j\vartheta^{1}_{1}}$}};
\node[mycolor2] at (\xshifttt-0.42, \yshifttt-0.30) {\scalebox{0.9}{$z_{2}e^{j\vartheta^{1}_{2}}$}};
\node[mycolor2] at (\xshifttt-2.52, \yshifttt-0.32) {\scalebox{0.9}{$z_{3}e^{j\vartheta^{1}_{3}}$}};
\node[mycolor1] at (\xshifttt-1.3, \yshifttt+0.22) {\scalebox{0.9}{$z_{1}e^{j\vartheta^{1}_{1}}{+}z_{2}e^{j\vartheta^{1}_{2}}$}};


\pgfmathsetmacro{\xshiftttt}{3}
\pgfmathsetmacro{\yshiftttt}{-3}

\draw[arrow, color = mycolor2, dashed, line width = 1.2] (\xshiftttt+0,\yshiftttt+0) -- (\xshiftttt-1.0108,\yshiftttt - 0.1354);
\draw[arrow, color = mycolor2, line width = 1.2] (\xshiftttt-1.0108,\yshiftttt - 0.1354) -- (\xshiftttt-1.6937,\yshiftttt- 0.3189);
\draw[arrow, color = mycolor2, line width = 1.2] (\xshiftttt-1.6937,\yshiftttt- 0.3189) -- (\xshiftttt-2.6928,\yshiftttt- 0.3606);
\draw[arrow, color = mycolor1, dotted, line width = 1.2] (\xshiftttt-1.0108,\yshiftttt - 0.1354) -- (\xshiftttt-2.6928,\yshiftttt- 0.3606);
\node[mycolor2] at (\xshiftttt-0.6, \yshiftttt+0.25) {\scalebox{0.9}{$z_{1}e^{j\vartheta^{2}_{1}}$}};
\node[mycolor2] at (\xshiftttt-1.4, \yshiftttt-0.5) {\scalebox{0.9}{$z_{2}e^{j\vartheta^{1}_{2}}$}};
\node[mycolor2] at (\xshiftttt-2.52, \yshiftttt-0.6) {\scalebox{0.9}{$z_{3}e^{j\vartheta^{1}_{3}}$}};
\node[mycolor1] at (\xshiftttt-2.25, \yshiftttt-0.01) {\scalebox{0.9}{$z_{2}e^{j\vartheta^{1}_{2}}{+}z_{3}e^{j\vartheta^{1}_{3}}$}};

\pgfmathsetmacro{\xshifttttt}{6.5}
\pgfmathsetmacro{\yshifttttt}{-3.2}

\draw[arrow, color = mycolor2, dashed, line width = 1.2] (\xshifttttt+0,\yshifttttt+0) -- (\xshifttttt-0.7044,\yshifttttt - 0.0621);
\draw[arrow, color = mycolor2, line width = 1.2] (\xshifttttt-0.7044,\yshifttttt - 0.0621) -- (\xshifttttt-1.7152,\yshifttttt- 0.1975);
\draw[arrow, color = mycolor2, line width = 1.2] (\xshifttttt-1.7152,\yshifttttt- 0.1975) -- (\xshifttttt-2.7143,\yshifttttt- 0.2392);
\draw[arrow, color = mycolor1, dotted, line width = 1.2] (\xshifttttt-0.7044,\yshifttttt - 0.0621) -- (\xshifttttt-2.7143,\yshifttttt- 0.2392);
\node[mycolor2] at (\xshifttttt-1.4, \yshifttttt-0.4) {\scalebox{0.9}{$z_{1}e^{j\vartheta^{2}_{1}}$}};
\node[mycolor2] at (\xshifttttt-0.42, \yshifttttt+0.22) {\scalebox{0.9}{$z_{2}e^{j\vartheta^{2}_{2}}$}};
\node[mycolor2] at (\xshifttttt-2.52, \yshifttttt-0.5) {\scalebox{0.9}{$z_{3}e^{j\vartheta^{1}_{3}}$}};
\node[mycolor1] at (\xshifttttt-2.1, \yshifttttt+0.15) {\scalebox{0.9}{$z_{1}e^{j\vartheta^{2}_{1}}{+}z_{3}e^{j\vartheta^{1}_{3}}$}};

\pgfmathsetmacro{\xshiftttttt}{10}
\pgfmathsetmacro{\yshiftttttt}{-3.1}

\draw[arrow, color = mycolor2, line width = 1.2] (\xshiftttttt+0,\yshiftttttt+0) -- (\xshiftttttt-0.7044,\yshiftttttt - 0.0621);
\draw[arrow, color = mycolor2, line width = 1.2] (\xshiftttttt-0.7044,\yshiftttttt - 0.0621) -- (\xshiftttttt-1.7152,\yshiftttttt- 0.1975);
\draw[arrow, color = mycolor2, dashed, line width = 1.2] (\xshiftttttt-1.7152,\yshiftttttt- 0.1975) -- (\xshiftttttt - 2.7086,\yshiftttttt- 0.3118);
\draw[arrow, color = mycolor1, dotted, line width = 1.2] (\xshiftttttt+0,\yshiftttttt+0) -- (\xshiftttttt-1.7152,\yshiftttttt- 0.1975);
\node[mycolor2] at (\xshiftttttt-1.4, \yshiftttttt+0.15) {\scalebox{0.9}{$z_{1}e^{j\vartheta^{2}_{1}}$}};
\node[mycolor2] at (\xshiftttttt-0.42, \yshiftttttt+0.24) {\scalebox{0.9}{$z_{2}e^{j\vartheta^{2}_{2}}$}};
\node[mycolor2] at (\xshiftttttt-2.52, \yshiftttttt+0.05) {\scalebox{0.9}{$z_{3}e^{j\vartheta^{2}_{3}}$}};
\node[mycolor1] at (\xshiftttttt-1.2, \yshiftttttt-0.5) {\scalebox{0.9}{$z_{1}e^{j\vartheta^{2}_{1}}{+}z_{2}e^{j\vartheta^{2}_{2}}$}};

\node[black] at (0.6, -1) {\scalebox{0.9}{Initial Vectors}};
\node[black] at (3.1, -1) {\scalebox{0.9}{Iter 1: Step 1}};
\node[black] at (5.8, -1) {\scalebox{0.9}{Iter 1: Step 2}};
\node[black] at (8.6, -1) {\scalebox{0.9}{Iter 1: Step 3}};

\node[black] at (1.7, -4.6) {\scalebox{0.9}{Iter 2: Step 1}};
\node[black] at (5.2, -4.6) {\scalebox{0.9}{Iter 2: Step 2}};
\node[black] at (8.6, -4.6) {\scalebox{0.9}{Iter 2: Step 3}};

\end{tikzpicture}}
    \caption{Visualization of the proposed algorithm in different steps for a toy example with $N=3$.}
    \label{fig:Visualization_Alg}
\end{figure}
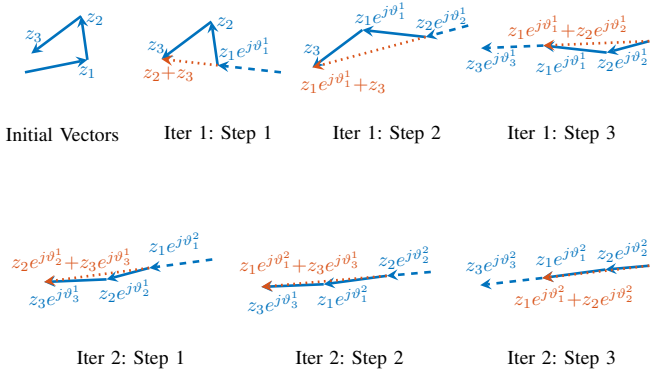

Fig.~\ref{fig:Visualization_Alg} presents a toy example that demonstrates the different steps of the proposed algorithm. The figure shows that, initially, the vectors are misaligned, leading to a relatively small amplitude of their sum compared to a scenario where they are aligned in the same direction. The algorithm rotates the phase of each vector to match the sum of the others. As the algorithm progresses and reaches the end of the second iteration, the vectors become almost aligned in the same direction, resulting in a nearly maximum amplitude of their sum.

\begin{thm}\label{Theorem_3}
The proposed Algorithm \ref{Alg1} converges to the maximum value of the function $f{\left(\bm{\vartheta}\right)}=\left|\sum_{n=1}^{N} z_{n}e^{j\vartheta_{n}}\right|^{2}$ as $M\to\infty$. 
\end{thm}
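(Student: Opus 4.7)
The plan is to show that the objective is monotonically non-decreasing along the iterations, that it is bounded above, and that every fixed point of the coordinate updates is a global maximizer.

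First I would establish the monotonicity. By Theorem~\ref{Theorem_2}, the inner-loop update at element $n$ replaces $\vartheta_n$ by the exact maximizer of $f(\bm{\vartheta})$ over $\vartheta_n\in[0,2\pi)$ with all other phases held fixed, since only three noiseless power measurements are needed to recover that maximizer. Consequently $f(\bm{\vartheta}^{(m,n)})$ is non-decreasing in the pair $(m,n)$ ordered lexicographically, where $\bm{\vartheta}^{(m,n)}$ denotes the iterate after the $n$th inner step of outer iteration $m$. Together with the uniform upper bound $f(\bm{\vartheta})\le\bigl(\sum_{k=1}^{N}|z_k|\bigr)^{2}$, this forces the values $f(\bm{\vartheta}^{(m)})$ to converge to some limit $f^{\star}$.

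Next I would invoke compactness of the torus $[0,2\pi)^{N}$ to extract a convergent subsequence $\bm{\vartheta}^{(m_k)}\to\bm{\vartheta}^{\infty}$; continuity of $f$ gives $f(\bm{\vartheta}^{\infty})=f^{\star}$. Because the successive improvements $f(\bm{\vartheta}^{(m_k+1)})-f(\bm{\vartheta}^{(m_k)})$ tend to zero and each single-coordinate subproblem is continuous in the remaining coordinates, $\bm{\vartheta}^{\infty}$ must be a fixed point of the coordinatewise update: for every $n$, $\vartheta_n^{\infty}$ already maximizes $f$ with the other phases fixed at $\vartheta_k^{\infty}$.

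I would then translate the fixed-point condition into a statement about vector alignment. Writing $v_n\delequal z_n e^{j\vartheta_n^{\infty}}$ and $S\delequal\sum_{k=1}^{N} v_k$, the single-element optimality condition from \eqref{eq:opt_theta_3} states that $v_n$ is aligned with $S-v_n$ for every $n$. This implies $v_n=\lambda_n S$ with $\lambda_n\ge 0$, so every $v_n$ shares the argument of $S$; hence $|S|=\sum_{n=1}^{N}|z_n|$, which is the global maximum of $f$. Therefore $f^{\star}=\max f$, and since $f(\bm{\vartheta}^{(m)})$ is monotone and has a subsequence converging to $\max f$, the whole sequence converges to $\max f$.

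The main obstacle is the compactness/continuity step that upgrades convergence of the objective values to the statement that the sub-sequential limit is an exact coordinatewise fixed point, rather than merely a point at which further updates yield vanishing gains. Care must be taken when some partial sum $\sum_{k\neq n} z_k e^{j\vartheta_k^{\infty}}$ vanishes, because the optimal $\vartheta_n$ is not uniquely defined there; this degenerate case can be ruled out by noting that any such configuration is either already optimal or can be perturbed by a single coordinate move that strictly increases $f$, contradicting the limiting fixed-point property.
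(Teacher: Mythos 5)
Your proposal is correct and follows essentially the same route as the paper's Appendix C: monotone improvement of $f$ via the exact single-coordinate maximization of Theorem~\ref{Theorem_2}, boundedness by $\bigl(\sum_{n}|z_{n}|\bigr)^{2}$, the monotone convergence theorem, and then the observation that a coordinatewise fixed point forces all rotated terms $z_{n}e^{j\vartheta_{n}}$ to share a common argument (your proportionality argument $v_n=\lambda_n S$ is an equivalent, slightly more direct form of the paper's Lemma~\ref{lemma_8}). If anything, your treatment is more careful than the paper's at the one delicate point — the paper simply asserts $\bm{\vartheta}^{\star}=\lim_{k\to\infty}\bm{\vartheta}^{k}$ exists, whereas you correctly pass to a convergent subsequence on the compact torus and handle the degenerate case $w_{n}^{\star}=0$ explicitly.
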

\begin{proof}\renewcommand{\qedsymbol}{}
The proof is provided in Appendix \ref{App_T3}.
\end{proof}

\subsubsection{Noisy Scenario}
In this section, we address the problem of phase alignment in noisy environments. We start by considering a single-phase control scenario and presenting estimators for the optimal phase shift. Subsequently, we extend this to the multiple-phase control scenario.

In Section \ref{sec:PS_1}, we developed Algorithm \ref{Alg1} to address the noiseless scenario by extending the single-phase control approach to the multiple-phase control scenario. Theorem \ref{Theorem_3} proves the convergence of Algorithm \ref{Alg1} to the optimal solution. In the noisy scenario, a similar extension can be made.

In this section, we propose a sub-optimal sequential algorithm that aims at maximizing the mean received power. For a phase configuration $\bm{\vartheta}=(\vartheta_{1}, \vartheta_{2},\dots, \vartheta_{N})$, the measured received power in the presence of the noise is
\begin{equation}\label{eq:Noisy_N}
    \mathrm{Y}= \left|\sum_{n=1}^{N}z_{n}e^{j\vartheta_{n}}+\mathrm{W}\right|^{2},
\end{equation}
where $\mathrm{W}\sim \mathcal{CN}{\left(0,\sigma^{2}\right)}$, and $\mathrm{Y}$ is the measured power and it is a random variable due to the noise. 
We define the average \gls*{snr} per element as
\begin{equation}\label{eq:SNR_def}
    \mathsf{SNR}\delequal \frac{\sum_{n=1}^{N}\mathbb{E}{\left(\left|\mathrm{Z}_{n}\right|^{2}\right)}}{N\sigma^{2}}.
\end{equation}

In Section \ref{Sec:Single_phase}, we considered the single-phase control scenario in the presence of noise. 
In the multi-phase control scenario with noise, we will make a similar extension as in Algorithm \ref{Alg1} to achieve Algorithm \ref{Alg2_Noisy}. In each iteration of this proposed algorithm, adjusting a single phase requires $L$ measurements. The algorithm updates the phase of one element using \eqref{eq:Linear_est} and \eqref{eq:est_vartheta}, then proceeds to the next element until all $N$ elements have had their phases updated. To adjust all the phases once, a total of $L\times N$ measurements are required. This process is repeated $M$ times. We further simplify the solution by using optimal measurement phases in Algorithm \ref{Alg3_Noisy}, where we apply \eqref{eq:noiseless_theta_hat} for the phase update of each element. 

Regarding the complexity of Algorithm \ref{Alg2_Noisy} and Algorithm \ref{Alg3_Noisy}, the number of operations within the inner loop scales linearly with $L$. Furthermore, with the inner and outer loops iterating $N$ and $M$ times, respectively, the algorithm's complexity can be represented as $\mathcal{O}(MNL)$.

Algorithm \ref{Alg3_Noisy} is computationally more efficient than Algorithm \ref{Alg2_Noisy}, as it performs the vector multiplication $\bm{y}^{\mathsf{T}}\bm{b}$ instead of the matrix multiplication $\bm{A}^{\dagger}\bm{y}$.

\begin{algorithm}[t!] \label{Alg2_Noisy}
\caption{Linear phase-alignment algorithm for power maximization in the presence of noise}
\textbf{Input:} The number of RIS elements $N$, the number of iterations is $M$, and set of measurement phases $\Phi=\left\{\varphi_{1}, \varphi_{2}, \dots, \varphi_{L}\right\}$.\\
\textbf{Output:} Near-optimal phase vector $\bm{\vartheta}^{\star}$ that maximizes the mean received power.\\
\textbf{Initialize:} $\bm{\vartheta}\gets \bm{\vartheta}_{0}$, $m\gets 0$, and $\bm{e}_{n}$ is a vector, where the component $n$ is $1$ and all other components are $0$. \\
\For{$l\gets 1$ \KwTo $L$}{
    $\bm{a}_{l}\gets \left[1, \cos\left(\varphi_{l}\right), \sin\left(\varphi_{l}\right)\right]^{\mathsf{T}}$\;
}
$\bm{A}\gets [\bm{a}_{1}, \dots, \bm{a}_{L}]^{\mathsf{T}}$\;
\While{$m < M$}{
$m \gets m + 1$\;
\For{$n\gets 1$ \KwTo $N$}{
    $\bm{y}\gets \left[0, \dots, 0\right]^{\mathsf{T}}$\;
    \For{$l\gets 1$ \KwTo $L$}{
    $y_{l}\gets$ the measured power for the phase configuration $\bm{\vartheta}+\varphi_{l}\bm{e}_{n}$\;
    }
$\widehat{\bm{x}}\gets \bm{A}^{\dagger}\bm{y}$\; 
    $\bm{\vartheta} \gets \bm{\vartheta}+\arg{\left(\widehat{x}_{2}+j\widehat{x}_{3}\right)}\bm{e}_{n}$\;
}}
$\bm{\vartheta}^{\star}\gets \bm{\vartheta}$\;
\end{algorithm}

\begin{algorithm}[ht] \label{Alg3_Noisy}
\caption{Optimal linear phase-alignment algorithm for power maximization in the presence of noise}
\textbf{Input:} The number of RIS elements $N$, the number of iterations $M$, and the number of measurement phases $L$.\\
\textbf{Output:} Near-optimal phase vector $\bm{\vartheta}^{\star}$ that maximizes the mean received power.\\
\textbf{Initialize:} $\bm{\vartheta}\gets \bm{\vartheta}_{0}$, $m\gets 0$, and $\bm{e}_{n}$ is a vector, where the component $n$ is $1$ and all other components are $0$. \\
$\bm{b}\gets \left[1, e^{j2\pi/L}, \dots, e^{j2\pi\left(L-1\right)/L}\right]^{\mathsf{T}}$\;
\While{$m < M$}{
$m \gets m + 1$\;
\For{$n\gets 1$ \KwTo $N$}{
    $\bm{y}\gets \left[0, \dots, 0\right]^{\mathsf{T}}$\;
    \For{$l\gets 1$ \KwTo $L$}{
    $y_{l}\gets$ the measured power for the phase configuration $\bm{\vartheta}+\frac{2\pi{\left(l-1\right)}}{L}\bm{e}_{n}$\;
    }
    $\bm{\vartheta} \gets \bm{\vartheta}+\arg{\left(\bm{y}^{\mathsf{T}}\bm{b}\right)}\bm{e}_{n}$\;
}}
$\bm{\vartheta}^{\star}\gets \bm{\vartheta}$\;
\end{algorithm}
\subsection{Discrete Phase Control}\label{sec:Discrete_Scenario}
In practice, with the current technology, a \gls*{ris} element can only apply a phase shift that takes a value from a finite set of possible phase shifts $\Omega$. We define this set as
\begin{equation*}
    \Omega \delequal\left\{\omega_{1}, \omega_{2}, \dots, \omega_{\left|\Omega\right|}\right\},
\end{equation*}
where $\omega_{k}\in[0,2\pi)$ for all $1\leq k\leq \left|\Omega\right|$, and $\omega_{k}\neq \omega_{k'}$ for all $1\leq k,k'\leq \left|\Omega\right|$ and $k\neq k'$. 

The problem of finding the optimal phase-shifts for the \gls*{ris} elements that maximize the received power becomes
\begin{maxi}[2]
{\scriptstyle \bm{\vartheta}}{f{\left(\bm{\vartheta}\right)}}{\hspace{-10.4mm}}{\label{eq:Prob_DS}} 
\addConstraint{\bm{\vartheta}\in \Omega^{N}},
\end{maxi}
where $f{\left(\bm{\vartheta}\right)}\delequal\left|\sum_{n=1}^{N} z_{n}e^{j\vartheta_{n}}\right|^{2}$. In general, the problem \eqref{eq:Prob_DS} might have more than one solution. For instance, when the set $\Omega$ is closed under modular $2\pi$ addition, assuming $\bm{\vartheta}^{\star}$ is a solution, one can easily show that all the members of the set $\left\{\bm{\vartheta}^{\star}+\bm{1}\omega_{k}\mod 2\pi : 1\leq k\leq \left|\Omega\right|\right\}$ are also solutions.

Below, we introduce a modified discrete version of the Theorem \ref{Theorem_2}.

\begin{thm}\label{Theorem_4}
Let $f:\Omega\rightarrow \mathbb{R}$ be a function with the expression
$$f{\left(\vartheta\right)}=\left|z_{0}+ ze^{j\vartheta}\right|^{2},$$ 
where $z_{0}, z\in \mathbb{C}$ are explicitly unknown. For $\left|\Omega\right|\geq 3$, the input variable $\vartheta=\omega_{k^{\star}}$ maximizes $f{\left(\cdot\right)}$, where
\begin{equation}
      k^{\star}\delequal\underset{1\leq k\leq \left|\Omega\right|}{\mathrm{argmin}}\  \min{\left(\zeta_{k}, 2\pi-\zeta_{k}\right)},
\end{equation}
$\zeta_{k} \!\delequal\! \vartheta^{\star}\!-\omega_{k} \!\!\mod 2\pi\,$ for all $1\!\leq\! k\!\leq\!\left|\Omega\right|$, and $\vartheta^{\star}$ is the optimal continuous phase shift that is computed using Theorem \ref{Theorem_2}. 
\end{thm}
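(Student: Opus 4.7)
The plan is to reduce the discrete optimization to the elementary fact that cosine on the circle is maximized at the angle closest to zero.

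First, I would rewrite the objective to make the dependence on the continuous optimum $\vartheta^{\star}$ explicit. Expanding the modulus squared gives $|z_{0}+ze^{j\vartheta}|^{2} = |z_{0}|^{2}+|z|^{2}+2\operatorname{Re}(z_{0}^{*}z\, e^{j\vartheta})$. Writing $z_{0}^{*}z = |z_{0}||z|\exp(j(\arg(z)-\arg(z_{0})))$ and using the identity $\vartheta^{\star}=\arg(z_{0})-\arg(z)$ guaranteed by Theorem~\ref{Theorem_2}, this simplifies to
\begin{equation*}
f(\vartheta) = |z_{0}|^{2}+|z|^{2}+2|z_{0}||z|\cos(\vartheta - \vartheta^{\star}).
\end{equation*}
Since the constant term $|z_{0}|^{2}+|z|^{2}$ and the nonnegative amplitude $2|z_{0}||z|$ do not depend on $\vartheta$, maximizing $f$ over $\Omega$ reduces to maximizing $\cos(\omega_{k}-\vartheta^{\star})$ over $k\in\{1,\dots,|\Omega|\}$.

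Next, I would translate this into the angular-distance form in the statement. For any $\zeta\in[0,2\pi)$, evenness and $2\pi$-periodicity of cosine give $\cos(\zeta) = \cos(2\pi-\zeta) = \cos\!\left(\min(\zeta,2\pi-\zeta)\right)$, so with $\zeta_{k}\delequal\vartheta^{\star}-\omega_{k}\!\mod 2\pi$ the quantity $d_{k}\delequal\min(\zeta_{k},2\pi-\zeta_{k})$ lies in $[0,\pi]$ and satisfies $\cos(\omega_{k}-\vartheta^{\star})=\cos(d_{k})$. Since cosine is strictly decreasing on $[0,\pi]$, $\cos(d_{k})$ is maximal if and only if $d_{k}$ is minimal, which is exactly the $\operatorname{argmin}$ defining $k^{\star}$.

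I do not expect a genuine obstacle: the result amounts to saying that the closest point of a discrete subset of the unit circle to the continuous optimum is the best discrete choice. The hypothesis $|\Omega|\geq 3$ plays no role in this reduction itself; it is there to secure the prerequisite of Theorem~\ref{Theorem_2}, namely that three distinct measurement phases exist within $\Omega$ so that the matrix $\bm{A}$ can be chosen to be invertible and $\vartheta^{\star}$ can actually be recovered from power measurements restricted to $\Omega$. The only mild case worth flagging is the degenerate one where $|z|=0$ or $|z_{0}|=0$, in which $f$ is constant on $\Omega$ and any tie-breaking rule for $k^{\star}$ yields an optimal solution.
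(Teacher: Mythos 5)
Your proof is correct and follows essentially the same route as the paper's: expand the squared modulus, reduce to maximizing $\cos(\vartheta^{\star}-\omega_{k})$, and use evenness, $2\pi$-periodicity, and monotonicity of cosine on $[0,\pi]$ to convert the argmax into the stated argmin. Your added remarks on the role of $|\Omega|\geq 3$ and the degenerate case $|z_{0}||z|=0$ are accurate but not needed for the argument.
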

\begin{proof}\renewcommand{\qedsymbol}{}
The proof is provided in Appendix \ref{App_T4}.
\end{proof}
We utilize Theorem \ref{Theorem_4} to propose the Algorithm \ref{Alg4} for the discrete scenario, which is a modified version of Algorithm \ref{Alg1}. The new Algorithm \ref{Alg4} is a natural extension of Algorithm \ref{Alg1} for the discrete phase-shift scenario, but it is not necessarily optimal.

The number of operations within the inner loop of Algorithm \ref{Alg4} scales linearly with $|\Omega|$. Moreover, with the inner and outer loops iterating $N$ and $M$ times, respectively, the algorithm's complexity can be expressed as $\mathcal{O}(MN|\Omega|)$.

\begin{thm}\label{Theorem_5}
The proposed Algorithm \ref{Alg4} converges as $M\to\infty$.
\end{thm}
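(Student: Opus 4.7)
My plan is to establish Theorem~\ref{Theorem_5} by a monotone convergence argument tailored to the discrete setting. Let $\{\bm{\vartheta}^{(k)}\}_{k\geq 0}$ denote the sequence of phase configurations produced by Algorithm~\ref{Alg4}, where $k$ indexes each individual inner-loop coordinate update (so after one full outer iteration exactly $N$ updates are performed, and $M\to\infty$ is equivalent to $k\to\infty$). First I would show that the sequence of received-power values $\{f(\bm{\vartheta}^{(k)})\}$ is monotone non-decreasing; then I would bound it above; from these two facts, convergence is immediate.

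For the monotonicity step I would invoke Theorem~\ref{Theorem_4}. When the inner loop updates the $n$-th coordinate, all other coordinates are held fixed, so we can view $f$ as a function of the single phase variable $\vartheta_n$ of the form $\vartheta\mapsto |z_{0}+z e^{j\vartheta}|^{2}$ with $z_{0}\delequal\sum_{n'\neq n}z_{n'}e^{j\vartheta_{n'}}$ and $z\delequal z_{n}$. Theorem~\ref{Theorem_4} guarantees that the selected $\omega\in\Omega$ maximizes this one-dimensional function over $\Omega$ (with the other coordinates held at their current values). In particular, the updated value is at least as large as the value attained at the previously-held phase, so $f(\bm{\vartheta}^{(k+1)})\geq f(\bm{\vartheta}^{(k)})$ for every $k$.

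For boundedness, the triangle inequality yields $f(\bm{\vartheta})=\bigl|\sum_{n=1}^{N}z_{n}e^{j\vartheta_{n}}\bigr|^{2}\leq \bigl(\sum_{n=1}^{N}|z_{n}|\bigr)^{2}$, uniformly in $\bm{\vartheta}$. Combined with monotonicity, $\{f(\bm{\vartheta}^{(k)})\}$ is a monotone non-decreasing real sequence bounded above, and hence converges by the monotone convergence theorem. In fact one can strengthen this: since $\Omega^{N}$ is a finite set, $f$ takes only finitely many values on the feasible domain, so a monotone non-decreasing sequence taking values in this finite set must be eventually constant, giving convergence in finitely many outer iterations.

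The main obstacle is not in the convergence argument itself but in certifying the monotonicity step, i.e., trusting that the prescription in Theorem~\ref{Theorem_4} (compute the continuous optimum $\vartheta^{\star}$ from Theorem~\ref{Theorem_2} and then pick the $\omega_{k}$ closest to it in circular distance) really does maximize $f$ over $\omega\in\Omega$ for the active coordinate. This reduces to the unimodality of the map $\vartheta\mapsto |z_{0}+z e^{j\vartheta}|^{2}$ on the circle (a single peak at $\vartheta^{\star}$ and a single trough at $\vartheta^{\star}+\pi$), so that ordering by circular distance from $\vartheta^{\star}$ coincides with ordering by function value; this is precisely the content of Theorem~\ref{Theorem_4}. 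Once this one-dimensional fact is granted, convergence of Algorithm~\ref{Alg4} follows by the monotone/finite-set argument sketched above. Note that, unlike Theorem~\ref{Theorem_3}, the limit need not coincide with the global maximum of $f$ over $\Omega^{N}$, which is consistent with the sub-optimality observed in the simulation results.
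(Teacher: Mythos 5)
Your proof is correct and follows essentially the same route as the paper's: monotonicity of $f(\bm{\vartheta}^{(k)})$ via Theorem~\ref{Theorem_4} applied to each single-coordinate update, the upper bound $\left(\sum_{n=1}^{N}|z_{n}|\right)^{2}$, and the monotone convergence theorem. Your additional observation that finiteness of $\Omega^{N}$ forces the sequence to become eventually constant is a valid strengthening not stated in the paper, but the core argument is identical.
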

\begin{proof}\renewcommand{\qedsymbol}{}
The proof is provided in Appendix \ref{App_T5}.
\end{proof}

\begin{algorithm}[t] \label{Alg4}
\caption{The proposed discrete phase-alignment algorithm for power maximization}
\textbf{Input:} The number of RIS elements $N$, the number of iterations $M$, and the set of possible phase shifts $\Omega=\left\{\omega_{1}, \omega_{2}, \dots, \omega_{\left|\Omega\right|}\right\}$. \\
\textbf{Output:} Near-optimal phase vector $\bm{\vartheta}^{\star}\in\Omega^{N}$ that maximizes the received power.\\
\textbf{Initialize:} $\bm{\vartheta}\gets \bm{\vartheta}_{0}\in\Omega^{N}$, $m\gets 0$, and $\bm{e}_{n}$ is an $N$-dimensional vector, where the component $n$ is $1$ and all other components are $0$. Select three measurement phase shifts $\varphi_{1}, \varphi_{2}, \varphi_{3}\in\Omega$, such that $\sin{\left(\varphi_{1}-\varphi_{3}\right)}+\sin{\left(\varphi_{2}-\varphi_{1}\right)}+\sin{\left(\varphi_{3}-\varphi_{2}\right)}\neq 0$. 
\\
    $\bm{A} \gets \begin{bmatrix}
    1 & \cos\left(\varphi_{1}\right) &  \sin\left(\varphi_{1}\right)\\
    1 & \cos\left(\varphi_{2}\right) &   \sin\left(\varphi_{2}\right)\\
    1 & \cos\left(\varphi_{3}\right) &  \sin\left(\varphi_{3}\right)
    \end{bmatrix}$;

\While{$m < M$}{
$m \gets m + 1$\;
$\bm{\vartheta}^{\textup{old}}\gets \bm{\vartheta}$\;
\For{$n\gets 1$ \KwTo $N$}{
    $y_{1}\gets$ the measured power for the phase configuration $\bm{\vartheta}+\left(\varphi_{1}-\vartheta_{n}\right)\bm{e}_{n}$\;
    $y_{2}\gets$ the measured power for the phase configuration $\bm{\vartheta}+\left(\varphi_{2}-\vartheta_{n}\right)\bm{e}_{n}$\;
    $y_{3}\gets$ the measured power for the phase configuration $\bm{\vartheta}+\left(\varphi_{3}-\vartheta_{n}\right)\bm{e}_{n}$\;
    $\bm{x}\gets \bm{A}^{-1}\bm{y}$\;
    $\alpha \gets \arg{\left(x_{2}+jx_{3}\right)}+\vartheta_{n}$\;
    \For{$k\gets 1$ \KwTo $\left|\Omega\right|$}{
    $\zeta_{k}\gets \alpha-\omega_{k}\mod 2\pi$\;
    }
    $k^{\star}\gets\underset{1\leq k\leq \left|\Omega\right|}{\mathrm{argmin}}\  \min{\left(\zeta_{k}, 2\pi-\zeta_{k}\right)}$\;

    $\bm{\vartheta} \gets \bm{\vartheta}+\left(\omega_{k^{\star}}-\vartheta_{n}\right)\bm{e}_{n}$\;
}
\If{$\bm{\vartheta}^{\textup{old}}=\bm{\vartheta}$}{
      \text{break}\;
   }
}
$\bm{\vartheta}^{\star}\gets \bm{\vartheta}$\;
\end{algorithm}

\section{Performance Evaluation}\label{sec:NumResults}
In this section, we evaluate the performance of our proposed algorithm and compare it to a random phase update method using Monte-Carlo simulations. We consider a \gls*{ris} with $100$ elements and generate random complex Gaussian distributed values with unit variance for $\left\{\mathrm{Z}_n\right\}_{n=1}^N$.
\subsection{Benchmark: Random Algorithm}
Since there is no prior work to compare against, we use a random algorithm as the benchmark for the proposed algorithm. The basic concept is that, at each step, the algorithm picks a single element of the \gls*{ris} sequentially and assigns a random phase value from a uniform distribution over the interval $[0, 2\pi)$ in the continuous scenario or over the set $\Omega$ in the discrete scenario. In the discrete scenario, we ensure that the new random phase differs from the previously used one. The new power measurement is then compared to the previous one. If the power increases, the algorithm updates the phase and moves on to the next element. If the measured power decreases, the algorithm maintains the previous phase and proceeds to the next element.

\subsection{Performance Evaluation}

We define the \gls*{nap} as the actually received power divided by the theoretically maximum received power level. 
More formally, the \gls*{nap} is defined as
\begin{equation}\label{eq:Norm_Acheived_Pow}
    \mathsf{NAP} \delequal \cfrac{\left|\sum_{n=1}^{N}\mathrm{Z}_{n}e^{j\vartheta_{n}}\right|^{2}}{\left(\sum_{n=1}^{N}\left|\mathrm{Z}_{n}\right|\right)^{2}},
\end{equation}
where $\bm{\vartheta}$ is the used \gls*{ris} phase configuration for different methods. 
Furthermore, the \gls*{mnap} is defined as $\mathsf{MNAP} \delequal \mathbb{E}{\left(\mathsf{NAP}\right)}$.

Fig.~\ref{fig:Mean_vs_measurements} illustrates the \gls*{mnap} versus the number of power measurements for the proposed Algorithm \ref{Alg3_Noisy} and random benchmark method under different \gls*{snr} conditions. The results are applicable to both indirect and direct \gls*{eh}, but we stress that the former typically leads to lower SNRs due to the extra pathloss.
The proposed method uses three measurements per phase update ($L=3$) of each element and converges to its final value after $300$ measurements (i.e., $LN$), while the random one requires ten times more measurements as it has a slow convergence rate. We notice that  the proposed method converges equally fast at any \gls*{snr}, while the convergence rate decreases for the random method when the \gls*{snr} increases. Importantly, the proposed method converges to a solution with a higher \gls*{rf} power than the random one. The \gls*{mnap} increases with the \gls*{snr}. Therefore, with three measurements per phase update, it can't reach its full potential in terms of receiving power from the ambient \gls*{rf} source. At low \glspl*{snr}, the proposed method does not reach the maximum theoretical power. This happens because the \gls*{mse} of the phase estimation does not decrease at each step, preventing the algorithm from converging to the theoretically optimal phase configuration. To overcome this issue, we can increase the number of measurements per phase update ($L$) in order to reduce the \gls*{mse} of the phase estimation. The \gls*{mnap} of the proposed method at $\mathsf{SNR}\to\infty$ can be regarded as a genie-aided phase update within the framework of our sequential algorithm. Moreover, with access to \gls*{csi}, we can achieve $\mathsf{MNAP}=1$. Thus, it can be considered as the upper bound with genie-aided assistance. 

\begin{figure}
    \centering
    \input{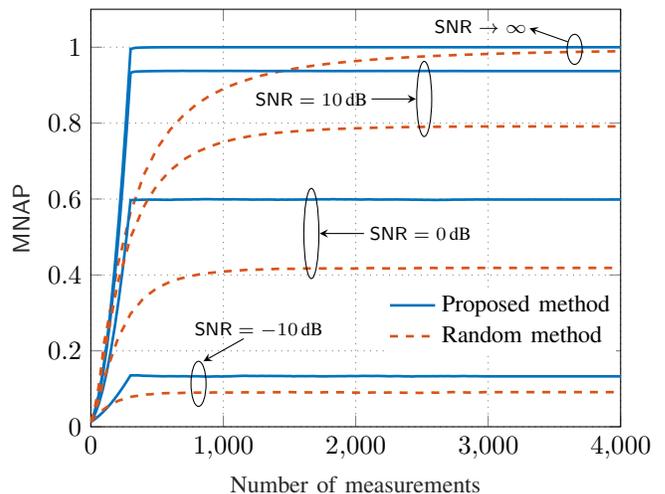}
    \caption{Mean normalized achieved power versus the number of measurements for the proposed Algorithm \ref{Alg3_Noisy} ($L=3$) and random method, and for various \glspl*{snr}.}
    \label{fig:Mean_vs_measurements}
\end{figure}
\begin{figure}
    \centering
    \input{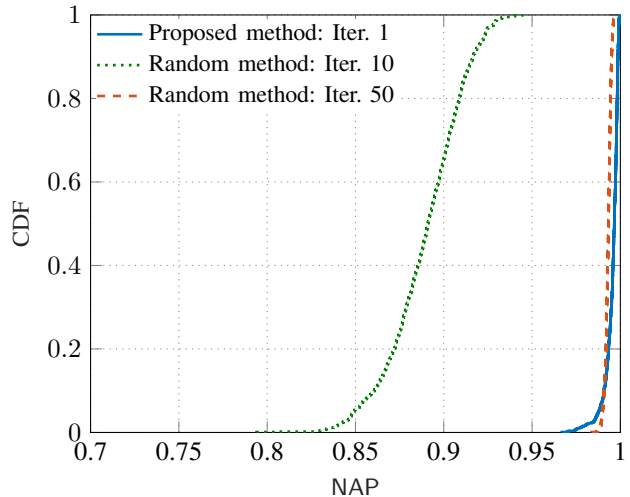}
    \caption{The CDF of the normalized achieved power for the proposed Algorithm \ref{Alg1} and random method in noiseless scenario.}
    \label{fig:CDF_proposed}
\end{figure}

Fig.~\ref{fig:CDF_proposed} shows the \gls*{cdf} of the relative achieved power for both the proposed and random algorithms, compared to the optimum in noiseless scenario. The randomness in the proposed method is due to the random channel realization, while in the random method it is due to both random channel realizations and random phase updates. We conducted $1000$ simulations with randomly generated channels. The results indicate that one iteration of the proposed algorithm achieves significantly better performance than the random method achieves after ten iterations. Although the proposed algorithm still performs slightly better than the random one after fifty iterations (e.g., on the average, as also shown in Fig.~\ref{fig:Mean_vs_measurements}), the latter provides a CDF curve with smaller variations.

In Fig.~\ref{fig:Max_vs_SNR}, the impact of the number of measurements per phase update ($L$) on the \gls*{mnap} of the proposed Algorithm \ref{Alg3_Noisy} and the random method is shown. It can be observed that as $L$ increases, the performance improves and the \gls*{ris} achieves higher power. For example, at $\mathsf{SNR}=-10$\,dB, the \gls*{mnap} is $0.09$ for the random method, while for the proposed method with $L=3$, $L=10$, $L=30$, and $L=100$, the \gls*{mnap}s are $0.14$, $0.31$, $0.54$, $0.75$, respectively. The convergence to the final value of the proposed algorithm occurs after almost $NL$ measurements. This increase in the number of measurements leads to an increase in the \gls*{nap} at low \glspl*{snr} but also increases the latency. Hence, there is a trade-off between the \gls*{nap} and the number of measurements. For instance, at $\mathsf{SNR}=10$\,dB, the \gls*{nap} for the proposed method with $L=3$ is $0.94$. In this case, additional measurements may not be necessary as they only slightly improve the \gls*{nap}. For low \glspl*{snr} scenarios, if the limit is $1000$ measurements, for a \gls*{ris} with $N=100$ elements, choosing $L=10$ is more advantageous compared to $L=3$, as it achieves more power.

It is worth mentioning that \gls*{ris} is envisioned to be deployed in environments where the channel between the transmitter and the receiver is weak. However, we deploy \gls*{ris} in a place where the channel between the transmitter and the \gls*{ris} and the channel between \gls*{ris} and the receiver are both strong. Hence, the low \gls*{snr} from the transmitter to the receiver does not imply the low \gls*{snr} from the transmitter to the \gls*{ris}. On the contrary, a  \gls*{ris} might only be effective if it has a relatively strong channel to both the transmitter and receiver \cite{9721205}. Thus, the \gls*{snr} can be sufficiently high to configure the \gls*{ris} as proposed in this paper.

\begin{figure}
    \centering
    \input{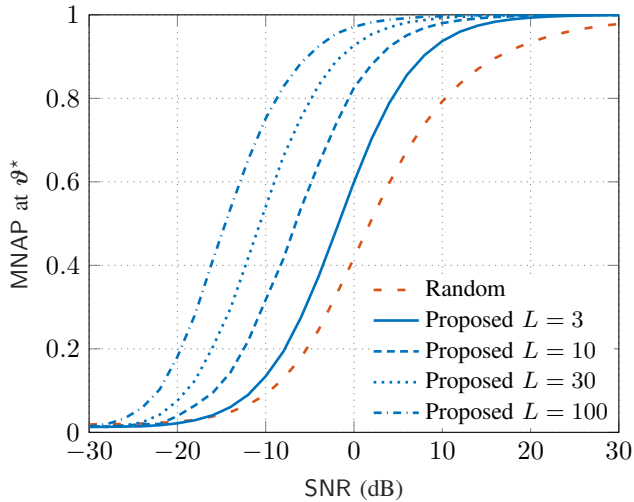}
    \caption{Impact of the \gls*{snr} on the mean normalized achieved power at $\bm{\vartheta}^{\star}$ of the proposed Algorithm \ref{Alg3_Noisy} and the random method.}
    \label{fig:Max_vs_SNR}
\end{figure}

In Fig.~\ref{fig:Discrete_Optimal}, we evaluate the performance of the proposed Algorithm \ref{Alg4} in the absence of the noise, considering a discrete set of possible phase shifts $\Omega=\left\{0, \pi/2, \pi, 3\pi/2\right\}$. The number of \gls*{ris} elements is $N=10$. We compare the \gls*{nap} 
of the proposed method to that of the random phase update method and the maximum feasible value (calculated using a brute-force method).  The error bars indicate a $95$ percent confidence interval. At the beginning, the random method achieves higher power, due to having one phase update per measurement, compared to the proposed method's one phase update per three measurements ($L=3$). However, as the \gls*{ris} takes more measurements, the proposed method outperforms the random one. The proposed method converges to a higher value than the random algorithm, but not exactly to the maximum feasible value. Therefore, our proposed method is sub-optimal for the discrete, noiseless scenario. We did not consider a larger number of \gls*{ris} elements as the number of possible phase configurations to find the maximum feasible value grows exponentially with $N$ (i.e., $\left|\Omega\right|^{N}$).

\begin{figure}
    \centering
    \input{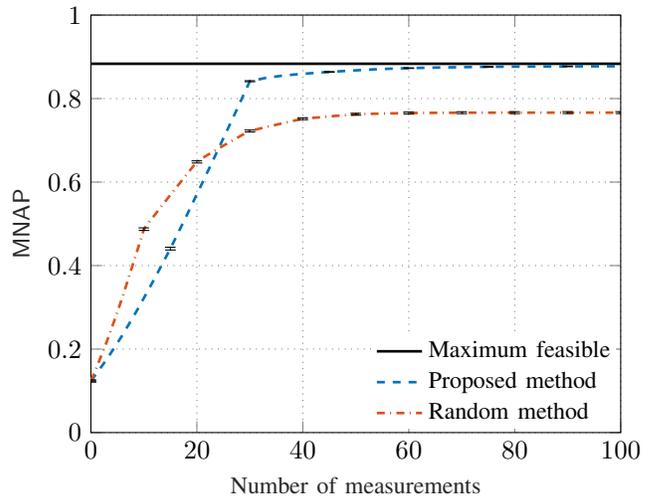}
    \caption{Mean normalized achieved power for the proposed Algorithm \ref{Alg4} and random method versus the number of measurements. The number of \gls*{ris} elements is $N=10$, and the set of possible phase shifts is $\Omega=\left\{0, \pi/2, \pi, 3\pi/2\right\}$. The maximum feasible value for the discrete scheme is shown with the black solid line and the error bars represent the $95$ percent confidence interval.}
    \label{fig:Discrete_Optimal}
\end{figure}

 To get a sense of the amount of power a device can harvest using \gls*{ris}, let's consider a scenario where an \gls*{ris} surface is situated on the $xy$ plane, with its center positioned at the origin. The elements form a grid with resolution $\lambda/2$, where we consider $\lambda=0.125$\,m (In this setup, a \gls*{ris} with $256$ elements occupies an area of $1$\,m$^{2}$).  Furthermore, we consider an isotropic \gls*{rf} source with a transmit power of $1$\,W at $[0,-3,4]^{\mathsf{T}}$ and an \gls*{eh} device at $[0,1,2]^{\mathsf{T}}$ (i.e., an example of indirect \gls*{eh}). We compute the channel gains between the \gls*{ris} elements and the transmit and receive antennas by taking into account the near-field and polarization effects \cite{9184098}.

We assume the following sigmoidal function for the \gls*{eh} conversion efficiency.
\begin{equation}\label{eq:Conversion_efficiency}
    \eta{\left(x\right)}\delequal\cfrac{P_{\text{sat}}\cdot\left(\left(1+e^{-a\left(x-b\right)}\right)^{-1}-\left(1+e^{ab}\right)^{-1}\right)}{x\cdot\left(1-\left(1+e^{ab}\right)^{-1}\right)},
\end{equation}
 where $P_{\text{sat}}$ implies the output harvesting saturation power. Constants $a$ and $b$ are associated with detailed circuit specifications, including factors like resistance, capacitance, and diode turn-on voltage. We adopt $a = 30$, $b = 0.07$, and $P_{\text{sat}}=0.1$\,W, and the unit of the input power $x$ is Watt \cite{7264986}.

Fig.~\ref{fig:Harvested_vs_N_Elements} illustrates the mean harvested power versus the number of \gls*{ris} elements for the proposed Algorithm \ref{Alg3_Noisy} and the random method (both after reaching convergence). We also considered a genie-aided scheme with available \gls*{csi}. As observed, increasing the number of \gls*{ris} elements (i.e., expanding the surface area) leads to an increase in harvested power until it reaches saturation. The saturation power, which is less than the saturation power of the harvesting device (i.e., $20$\,dBm), occurs due to the limited power that an infinitely large \gls*{ris} surface can reflect toward a receiver \cite{9184098}. Alternatively, if the transmit power were sufficiently large, saturation could result from the harvesting device.

Furthermore, an optimality gap in mean harvested power is observed between the genie-aided method and the proposed and random methods, and this gap becomes smaller at higher \glspl*{snr}.

The gap in mean harvested power between the proposed method and the random method demonstrates the superiority of the proposed method. For instance, for a \gls*{ris} with $1024$ elements (i.e., a surface $4$\,m$^{2}$), the gap between the proposed method and random method at the \gls*{snr} of $-20$\,dB is $2.27$\,dB, which is significant. It is worth mentioning that the proposed method converges with less number measurements than the random method (as shown previously in Fig.~\ref{fig:Mean_vs_measurements}).

\begin{figure}
    \centering
    \input{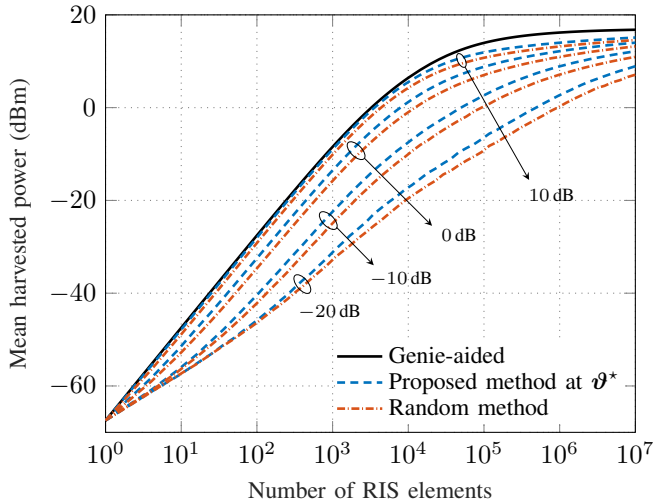}
    \caption{Impact of the number of the \gls*{ris} elements and \gls*{snr} on the harvested power of the proposed Algorithm \ref{Alg3_Noisy} and the random method (both after reaching convergence).}
    \label{fig:Harvested_vs_N_Elements}
\end{figure}

\section{Conclusions}\label{sec:Conclusion}
This paper has presented a method for energy harvesting at a \gls*{ris} from an ambient \gls*{rf} source in the absence of coordination with the source. The objective was to maximize the received power by adjusting the phases of the \gls*{ris} elements based only on power measurements, obtained without having a \gls*{rf} receiver. The proposed sequential algorithm outperformed the random phase update method in terms of achieved power after convergence, while requiring fewer measurements. Also, it is proved to converge to the optimum for the noiseless continuous phase scenario. The impact of the number of measurements per phase update on the achieved power was evaluated, showing that increasing the number of measurements leads to improved performance, although the amount of improvement depends on the \gls*{snr} of the system. A higher number of measurements leads to significant improvement in received power in low \gls*{snr} regimes but is unnecessary in high \gls*{snr} regimes where three measurements provide a near-optimal solution. Besides, \gls*{eh} is beneficiary when the signal is strong and the \gls*{snr} is large. 
For a discrete, noiseless scenario, the proposed method was found to be near-optimal, achieving a higher power than the random algorithm but not exactly the maximum feasible power. In future work, we plan to extend this study to consider more general scenarios.
For example, the proposed method can be used to estimate the channel to the transmitter and thereby enable RIS configuration for data transmission without the involvement of the BS.

\appendices
\section{Proof of Theorem \ref{Theorem_1}} \label{App_T1}
The matrix $\bm{A}^{\mathsf{T}}\bm{A}$ for a given measurement phase vector $\bm{\varphi}$ is
\begin{equation} \label{eq:ATA}
    \bm{A}^{\mathsf{T}}\!\bm{A} \!\delequal\! L\!\begin{bmatrix}
    1 & r_{1}\!\cos{\!\left(\delta_{1}\right)} &  r_{1}\!\sin{\!\left(\delta_{1}\right)}\\
    r_{1}\!\cos{\!\left(\delta_{1}\right)} & \frac{1\!+\!r_{2}\!\cos{\!\left(\delta_{2}\right)}}{2} &  \frac{r_{2}\!\sin{\!\left(\delta_{2}\right)}}{2} \\
    r_{1}\!\sin{\!\left(\delta_{1}\right)} & \frac{r_{2}\!\sin{\!\left(\delta_{2}\right)}}{2}  &  \frac{1\!-\!r_{2}\!\cos{\!\left(\delta_{2}\right)}}{2}
    \end{bmatrix},
\end{equation}
where $ r_{1}e^{j\delta_{1}}\delequal\sum_{l=1}^{L}e^{j\varphi_{l}}/L$ and  $ r_{2}e^{j\delta_{2}}\delequal\sum_{l=1}^{L}e^{j2\varphi_{l}}/L$. The equation for the eigenvalues of $\bm{A}^{\mathsf{T}}\bm{A}$ is $\det\!{\left(\bm{A}^{\mathsf{T}}\bm{A}-\lambda\bm{I}_{3}\right)}=0$,
where we know $\lambda_{i}=d_{i}^{2}$ for $1\leq i\leq 3$. After some mathematical manipulations, we get 
\begin{align}\label{eq:rho}
    \rho&\delequal\sum_{i=1}^{3}\frac{1}{d_{i}^{2}}=\frac{5-4r_{1}^2-r_{2}^2}{L\!\left(1\!-\!2r_{1}^{2}\!-r_{2}^2+\!2r_{1}^2r_{2}\cos{\left(\delta_{2}\!-\!2\delta_{1}\right)}\right)},
\end{align}
and since $\rho\geq 0$ and $0\leq r_{1}, r_{2}\leq 1$, we should have 
\begin{equation}\label{eq:theory_5_a}
    1\!-\!2r_{1}^{2}\!-r_{2}^2+\!2r_{1}^2r_{2}\!\cos{\left(\delta_{2}\!-\!2\delta_{1}\right)}>0.
\end{equation}

From $0\leq r_{1}, r_{2}\leq 1$ and Lemma \ref{lemma_2} with $\alpha = 0.4$, we have 
\begin{align}\label{eq:55}
     0.4 r_{2}^{2}+0.6r_{1}^{2}&\geq r_{1}^{2}r_{2}  &&\Longleftrightarrow \nonumber\\
     2 r_{2}^{2}+3 r_{1}^{2} - 5r_{1}^{2}r_{2}\cos{\left(\delta_{2}-2\delta_{1}\right)} &\geq 0  &&\stackrel{\eqref{eq:theory_5_a}}{\Longleftrightarrow}\nonumber\\
     \frac{5-4r_{1}^2-r_{2}^2}{\!1\!-\!2r_{1}^{2}\!-r_{2}^2+\!2r_{1}^2r_{2}\!\cos{\left(\delta_{2}\!-\!2\delta_{1}\right)}}&\geq 5  &&\stackrel{\eqref{eq:rho}}{\Longleftrightarrow}\nonumber\\
     \rho &\geq \frac{5}{L},
\end{align}
where the equality holds when $r_{1}=r_{2}=0$. According to Lemma \ref{lemma_3}, if $\bm{\varphi}^{\star}=(\varphi_{0}, \varphi_{0}\!+\!2\pi/L, \ldots,$ $\varphi_{0}\!+\!2\pi\left(L\!-\!1\right)/L)$, then $r_{1}=r_{2}=0$, making it an optimal solution that minimizes $\rho$.
Also, when $r_{1}=r_{2}=0$, we have $\bm{A}^{\mathsf{T}}\bm{A} =\operatorname{diag}\!{\left(\left[L, L/2, L/2\right]^{\mathsf{T}}\right)}$. Therefore, $d_{1}^{\star}=\sqrt{L}$, $d_{2}^{\star}=d_{3}^{\star}=\sqrt{L/2}$.
\qed
\begin{lem} \label{lemma_2}
Given $0\leq \alpha\leq 0.5$ and $M>0$, for any $0\leq x,y\leq M$, we have 
    \begin{equation}\label{eq:lemma_2}
         \alpha x^2+\left(1-\alpha\right)y^{2}\geq \frac{xy^{2}}{M}.
    \end{equation}
\end{lem}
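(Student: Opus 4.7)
The plan is to exploit the fact that the expression in question is affine in the parameter $\alpha$, which reduces the problem to checking just the two endpoints of $[0, 0.5]$.

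Concretely, I would define $h(\alpha) \delequal \alpha x^{2}+(1-\alpha)y^{2}-\tfrac{xy^{2}}{M} = \alpha(x^{2}-y^{2})+y^{2}-\tfrac{xy^{2}}{M}$ for fixed $x,y$ with $0 \leq x, y \leq M$. Since $h$ is affine (hence monotone) in $\alpha$, its minimum on any interval is attained at an endpoint, so it suffices to verify $h(0) \geq 0$ and $h(0.5) \geq 0$.

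First I would check $h(0) = y^{2}-\tfrac{xy^{2}}{M} = \tfrac{y^{2}(M-x)}{M} \geq 0$, which is immediate from $x \leq M$. Next, for $h(0.5)$ I would invoke the AM-GM inequality $\tfrac{x^{2}}{2}+\tfrac{y^{2}}{2} \geq xy$, so that $h(0.5) = \tfrac{x^{2}}{2}+\tfrac{y^{2}}{2}-\tfrac{xy^{2}}{M} \geq xy - \tfrac{xy^{2}}{M} = \tfrac{xy(M-y)}{M} \geq 0$, where the last step uses $y \leq M$ and $x \geq 0$. Combining these with affinity in $\alpha$ yields $h(\alpha) \geq 0$ for all $\alpha \in [0, 0.5]$, which is the claim.

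There is no significant obstacle here: the only subtlety is recognizing the linearity in $\alpha$, which collapses a two-parameter inequality over a region to two one-parameter checks. If one missed this structure, a brute-force case split on whether $x \leq M(1-\alpha)$ would also work (the residual quadratic $\alpha x^{2}-Mx+M^{2}(1-\alpha)$ has roots $x=M$ and $x=M(1-\alpha)/\alpha \geq M$ precisely when $\alpha \leq 0.5$), but the affine-in-$\alpha$ argument is cleaner and directly reveals why the restriction $\alpha \leq 0.5$ is the natural one.
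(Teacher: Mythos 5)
Your proof is correct, but it follows a genuinely different route from the paper's. The paper fixes $\alpha$ and studies the auxiliary function $g(x)=\alpha x M^{-2}+(1-\alpha)x^{-1}$: it shows $g$ is convex with minimizer $M\sqrt{(1-\alpha)/\alpha}\geq M$ precisely because $\alpha\leq 0.5$, so on $(0,M]$ the minimum is at $x=M$ with $g(M)=1/M$, and then chains $\alpha x^{2}+(1-\alpha)y^{2}\geq xy^{2}g(x)\geq xy^{2}/M$ (the first of these inequalities is where $y\leq M$ enters). You instead fix $(x,y)$ and exploit affinity in $\alpha$, reducing the claim to the two endpoints $\alpha=0$ (immediate from $x\leq M$) and $\alpha=0.5$ (AM--GM plus $y\leq M$); both endpoint checks are valid, and since an affine function on an interval is minimized at an endpoint, the conclusion follows. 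Your argument is more elementary — no derivatives or convexity needed — and it isolates the role of the hypothesis $\alpha\leq 0.5$ just as transparently as the paper's (there it guarantees the unconstrained minimizer of $g$ lies outside $(0,M]$; in yours it is exactly the threshold at which the AM--GM endpoint still goes through). The paper's calculus argument, for its part, makes it easy to see what the sharp constant would be for other ranges of $\alpha$. Your closing remark about the fallback case split on $x\lessgtr M(1-\alpha)$ is also accurate, though unnecessary given the main argument.
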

\begin{proof}
For $x=0$ or $y=0$, \eqref{eq:lemma_2} holds. Lets consider $0<x,y\leq M$, assuming $g{\left(x\right)}\delequal \alpha x M^{-2}+\left(1-\alpha\right)x^{-1}$, we have $g''{\left(x\right)}\geq 0$. Hence, $g{\left(\cdot\right)}$ is a convex function. We can compute the optimum solution to minimize $g{\left(\cdot\right)}$ by solving $g'{\left(x^{\star}\right)}=0$. Hence, we get $x^{\star}=M\sqrt{\tfrac{1-\alpha}{\alpha}}$.
    From $0\leq \alpha\leq 0.5$, we have $x^{\star}\geq M$. Therefore, the minimum value of $g{\left(\cdot\right)}$ over the interval $(0, M]$ occurs at the boundary $x=M$, and we have $g{\left(M\right)}=1/M$.  
    Hence,
    \begin{align}
         \alpha x^2+\left(1-\alpha\right)y^{2}\geq xy^{2}g{\left(x\right)}\geq xy^{2}g{\left(M\right)}=\frac{xy^{2}}{M},
    \end{align}
    and the proof is complete.
\end{proof}

\begin{lem} \label{lemma_3}
Assuming $L\geq 3$, $\varphi_{0}\in\mathbb{R}$, and $\varphi_{l}=\varphi_{0}+2\pi\left(l-1\right)/L$ for all $1\leq l\leq L$, we have 
\begin{equation}
    \sum_{l=1}^{L}e^{j\varphi_{l}}=\sum_{l=1}^{L}e^{j2\varphi_{l}}=0.
\end{equation}
\end{lem}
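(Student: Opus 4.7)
The plan is to reduce both sums to standard geometric series over roots of unity by factoring out the common phase $e^{j\varphi_0}$ (respectively $e^{j2\varphi_0}$) and then applying the closed-form formula $\sum_{k=0}^{L-1} r^k = (1-r^L)/(1-r)$ whenever $r \neq 1$.

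For the first sum, I would substitute $\varphi_l = \varphi_0 + 2\pi(l-1)/L$ to obtain
\begin{equation*}
\sum_{l=1}^{L} e^{j\varphi_l} = e^{j\varphi_0} \sum_{k=0}^{L-1} \omega^k, \qquad \omega \delequal e^{j 2\pi/L}.
\end{equation*}
Since $L \geq 3$, $\omega \neq 1$, so the geometric-series formula yields $(1 - \omega^L)/(1-\omega) = (1 - e^{j2\pi})/(1-\omega) = 0$, and the first identity follows immediately.

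For the second sum, the same substitution gives
\begin{equation*}
\sum_{l=1}^{L} e^{j2\varphi_l} = e^{j 2\varphi_0} \sum_{k=0}^{L-1} \zeta^k, \qquad \zeta \delequal e^{j 4\pi/L}.
\end{equation*}
The only point that requires brief justification is that $\zeta \neq 1$, which is where the hypothesis $L \geq 3$ is genuinely used (if $L = 2$ the sum is $2 e^{j 2 \varphi_0}$, which is generally nonzero). For $L \geq 3$ we have $0 < 4\pi/L \leq 4\pi/3 < 2\pi$, so $\zeta \neq 1$, and again $\sum_{k=0}^{L-1} \zeta^k = (1 - e^{j4\pi})/(1 - \zeta) = 0$.

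There is no real obstacle here; the argument is a direct computation with the geometric-series formula. The only care needed is in checking the $L \geq 3$ hypothesis in the second identity, since the first identity actually holds already for $L \geq 2$ while the second fails at $L = 2$.
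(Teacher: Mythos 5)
Your proof is correct and follows essentially the same route as the paper: factor out $e^{j\varphi_0}$ (resp.\ $e^{j2\varphi_0}$) and sum the resulting geometric series of roots of unity, with the hypothesis $L\geq 3$ needed only to ensure $e^{j4\pi/L}\neq 1$ in the second sum. Your explicit check of that nondegeneracy condition is a slightly more careful write-up of exactly the computation the paper performs.
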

\begin{proof}
For $L\geq 2$, we have $\sum_{l=1}^{L}e^{j\varphi_{l}}\!=\!e^{j\varphi_{0}}\!\sum_{l=1}^{L}e^{\frac{j2\pi\left(l-1\right)}{L}}
    \!=\!e^{j\varphi_{0}}\frac{1\!-\!e^{j2\pi\left(l\!-\!1\right)}}{1\!-\!e^{\frac{j2\pi}{L}}}=0$,
and for $L\geq 3$, we have $\sum_{l=1}^{L}e^{j2\varphi_{l}}\!=\!e^{j2\varphi_{0}}\!\sum_{l=1}^{L}e^{\frac{j4\pi\left(l-1\right)}{L}}
    \!=\!e^{j2\varphi_{0}}\frac{1\!-\!e^{j4\pi\left(l\!-\!1\right)}}{1\!-\!e^{\frac{j4\pi}{L}}}=0$,
and the proof is complete.
\end{proof}

\section{Proof of Theorem \ref{Theorem_2}} \label{App_T2}
Let us define $\bm{x}\delequal \left[\left|z_{0}\right|^{2}+\left|z\right|^{2},  2\operatorname{Re}\!{\left(z_{0}z^{*}\right)}, 2\operatorname{Im}\!{\left(z_{0}z^{*}\right)}\right]^{\mathsf{T}}$.
The variable $\vartheta^{\star}$ that maximizes $f{\left(\cdot\right)}$ is given by
\begin{align*}
     \vartheta^{\star}&=\arg{\left(z_{0}\right)}-\arg{\left(z\right)}=\arg{\left(z_{0}z^{*}\right)}\\
     &=\arg{\left(\operatorname{Re}\!{\left(z_{0}z^{*}\right)}+j\operatorname{Im}\!{\left(z_{0}z^{*}\right)}\right)}=\arg{\left(x_{2}+j x_{3}\right)}.
\end{align*}
Therefore, we can compute the optimal phase shift from $\bm{x}$. We show that one can compute $\bm{x}$ using the received power from three different measurements. By expanding the function $f{\left(\varphi_{l}\right)}$, we get $f{\left(\varphi_{l}\right)}\!=\!\left|z_{0}\right|^{2}\!\!+\! \left|z\right|^{2}\!\!+\!2\operatorname{Re}\left(z_{0}z^{*}\right)\!\cos{\left(\varphi_{l}\right)}\!\!+\!2\operatorname{Im}\left(z_{0}z^{*}\right)\!\sin{\left(\varphi_{l}\right)}=\bm{a}_{l}^\mathsf{T}\bm{x}$,
where $\bm{a}_{l}\delequal\left[1, \cos{\left(\varphi_{l}\right)},  \sin{\left(\varphi_{l}\right)}\right]^{\mathsf{T}}$, for $1\leq l\leq 3$. 
Assuming $\bm{A}\delequal \left[\bm{a}_{1}, \bm{a}_{2}, \bm{a}_{3}\right]^\mathsf{T}$, and $\bm{y}=\left[f{\left(\varphi_{1}\right)}, f{\left(\varphi_{2}\right)}, f{\left(\varphi_{3}\right)}\right]^{\mathsf{T}}$, we have $\bm{A}\bm{x}=\bm{y}$, or $\bm{x}=\bm{A}^{-1}\bm{y}$ for $\det\!{\left(\bm{A}\right)}\neq 0$.\qed
\section{Proof of Theorem \ref{Theorem_3}}\label{App_T3}
We denote the phase-shift vector generated by the algorithm at iteration $m$ up to the element $n$ with $\bm{\vartheta}^{k}$, where $k=mN+n$. We prove that for a given $N$, $\lim_{k\to\infty} f{\left(\bm{\vartheta}^{k}\right)}=\max_{\bm{\vartheta}}f{\left(\bm{\vartheta}\right)}$.

For any integer $1\leq n\leq N$ and an integer $k\geq 1$, we define $w_{n}^{k}\delequal \sum_{\substack{i=1\\i\neq n}}^{N}z_{i}e^{j\vartheta^{k}_{i}}$.
Hence, we have $f{\left(\bm{\vartheta}^{k}\right)} \!=\! \left|w^{k}_{n}\!+\!z_{n}e^{j\vartheta^{k}_{n}}\right|^2$.
At the phase update $k\!+\!1$, the algorithm only updates the phase of the element $n$ ($w^{k+1}_{n}\!=\!w^{k}_{n}$) as $f{\left(\bm{\vartheta}^{k+1}\right)} \!=\! \left|w^{k}_{n}\!+\!z_{n}e^{j\vartheta^{k+1}_{n}}\right|^2$.
According to the Theorem \ref{Theorem_2}, we have
\begin{equation}
    f{\left(\!\bm{\vartheta}^{k+1}\!\right)} \!=\! \left|w^{k}_{n}\!+\!z_{n}e^{j\vartheta^{k+1}_{n}}\!\right|^2\!\geq\! \left|w^{k}_{n}\!+\!z_{n}e^{j\vartheta^{k}_{n}}\!\right|^2\!=\!f{\left(\!\bm{\vartheta}^{k}\!\right)}.
\end{equation}
Therefore, $f{\left(\bm{\vartheta}^{1}\right)}, f{\left(\bm{\vartheta}^{2}\right)}, \dots$ form an increasing sequence. Moreover,  the maximum value of $f{\left(\cdot\right)}$ is $\left(\sum_{n=1}^{N}\left|z_{n}\right|\right)^{2}$, hence, the set $\mathcal{F}\delequal\left\{f{\left(\bm{\vartheta}^{k}\right)}: k\in\mathbb{N}\right\}$ is upper bounded by $\left(\sum_{n=1}^{N}\left|z_{n}\right|\right)^{2}$. Therefore, according to the monotone convergence theorem \cite{bogachev2007measure}, we have $\lim_{k\to\infty}f{\left(\bm{\vartheta}^{k}\right)} = \sup{\mathcal{F}}\leq \left(\sum_{n=1}^{N}\left|z_{n}\right|\right)^{2}$.
If we show that $\sup{\mathcal{F}}=\left(\sum_{n=1}^{N}\left|z_{n}\right|\right)^{2}$, the proof is complete. Lets define $\bm{\vartheta}^{\star}\delequal\lim_{k\to\infty}\bm{\vartheta}^{k}$. For the phase shift vector $\bm{\vartheta}^{\star}$, any further phase update will not increase $f{\left(\cdot\right)}$. In other words, for updating element $n$, we should apply Theorem \ref{Theorem_2} to the function $f{\left(\bm{\vartheta}^{\star}\right)} = \left|w_{n}^{\star}+z_{n}e^{j\vartheta^{\star}_{n}}\right|^2$,
where $w_{n}^{\star}\delequal \sum_{\substack{i=1\\i\neq n}}^{N} z_{i}e^{j\vartheta^{\star}_{i}}$. Since any further update will not increase the value of $f{\left(\cdot\right)}$, therefore for all $1\leq n\leq N$, we have $\operatorname{Arg}{\left(z_{n}e^{j\vartheta^{\star}_{n}}\right)}=\operatorname{Arg}{\left(w_{n}^{\star}\right)}$. Using Lemma \ref{lemma_8}, we have $\operatorname{Arg}{\left(z_{1}e^{j\vartheta^{\star}_{1}}\right)}=\operatorname{Arg}{\left(z_{2}e^{j\vartheta^{\star}_{2}}\right)}=\dots = \operatorname{Arg}{\left(z_{N}e^{j\vartheta^{\star}_{N}}\right)}$, or $\operatorname{Arg}{\left(z_{1}\right)}+\vartheta^{\star}_{1}=\dots= \operatorname{Arg}{\left(z_{N}\right)}+\vartheta^{\star}_{N} = \vartheta_{0} \!\mod 2\pi$.
Therefore, we have $\vartheta_{n}^{\star}=\vartheta_{0}-\arg{\left(z_{n}\right)}, \ \text{for all}\ 1\leq n\leq N$,
that are the phase shifts that maximize $f{\left(\cdot\right)}$. \qed

\begin{lem} \label{lemma_8}
    Assume for each $1\leq n\leq N$, $u_{n}\delequal \sum_{\substack{i=1\\i\neq n}}^{N}z_{i}$, if 
    $\operatorname{Arg}{\left(z_{n}\right)}=\operatorname{Arg}{\left(u_{n}\right)}$ for all $1\leq n\leq N$, then
    $\operatorname{Arg}{\left(z_{1}\right)}=\operatorname{Arg}{\left(z_{2}\right)}=\dots=\operatorname{Arg}{\left(z_{N}\right)}$.
\end{lem}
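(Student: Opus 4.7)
\medskip

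\noindent\textbf{Proof plan for Lemma \ref{lemma_8}.} The idea is to exploit the elementary identity $u_n + z_n = S$, where $S \delequal \sum_{i=1}^{N} z_i$ is the same complex number for every $n$. Thus, whatever constraint the hypothesis places on the pair $(z_n, u_n)$ translates directly into a constraint on the relationship between $z_n$ and the global sum $S$, which by construction is $n$-independent. This is what will force all the $z_n$ to share a common argument.

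I would proceed as follows. First, interpret the hypothesis $\operatorname{Arg}(z_n) = \operatorname{Arg}(u_n)$ geometrically: two nonzero complex numbers share their principal argument if and only if they lie on the same open ray from the origin, i.e., $u_n = c_n z_n$ for some real scalar $c_n \geq 0$ (with $c_n = 0$ precisely when $u_n = 0$). Substituting into $S = u_n + z_n$ yields $S = (1 + c_n) z_n$, so that
\begin{equation*}
    z_n = \frac{S}{1 + c_n}, \qquad 1 \leq n \leq N.
\end{equation*}
Since $1 + c_n$ is a strictly positive real number, each $z_n$ is a positive real multiple of the single complex number $S$. Consequently $\operatorname{Arg}(z_n) = \operatorname{Arg}(S)$ for every $n$, which is the claim.

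The main subtlety is to rule out the degenerate situation $S = 0$, which would make $\operatorname{Arg}(S)$ undefined and break the last step. I would handle this by contradiction: if $S = 0$, then $u_n = -z_n$ for every $n$, in which case $\operatorname{Arg}(u_n) = \operatorname{Arg}(z_n) \pm \pi$, a quantity that never coincides with $\operatorname{Arg}(z_n)$ in the principal-value interval $(-\pi, \pi]$. This contradicts the hypothesis, so $S \neq 0$ whenever the $z_n$ are nonzero (as is the case in the invocation of the lemma inside the proof of Theorem \ref{Theorem_3}, where the $z_n$ encode physical channel coefficients). A brief remark will also dispatch the trivially degenerate case where some $z_n$ vanishes, using the standard convention that a zero summand does not constrain the argument condition on the remaining terms. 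No lengthy calculation is needed; the entire argument fits in a few lines once the identity $S = (1+c_n) z_n$ is written down.
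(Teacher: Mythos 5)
Your proof is correct, and it takes a route that is genuinely different from the paper's. The paper argues pairwise: for each pair $m\neq n$ it writes $z_m=c_m u_m$ and $z_n=c_n u_n$, introduces the partial sum $u_{m,n}\delequal\sum_{i\neq m,n}z_i$, and eliminates $u_{m,n}$ between the two resulting equations to obtain $z_m=\tfrac{c_m\left(1+c_n\right)}{c_n\left(1+c_m\right)}z_n$, a positive real multiple. You instead observe that $u_n+z_n=S$ is the same complex number for every $n$, so the proportionality $u_n=c_n z_n$ immediately gives $z_n=S/\left(1+c_n\right)$, exhibiting every $z_n$ as a positive multiple of the single vector $S$. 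Your version buys three things: it avoids the small linear elimination entirely, it identifies the common argument explicitly as $\operatorname{Arg}\!\left(S\right)$ (which is in fact the quantity $\vartheta_{0}$ that appears in the proof of Theorem \ref{Theorem_3} where the lemma is invoked), and it explicitly disposes of the degenerate cases $S=0$ and $z_n=0$, which the paper's proof passes over silently (its division by $c_n$ tacitly assumes $c_n>0$, i.e., $z_n\neq 0$ and $u_n\neq 0$). The paper's pairwise argument has the minor virtue of never naming the total sum, but otherwise the two proofs are of comparable length and rigor, with yours slightly the cleaner.
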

\begin{proof}
    For $1\leq m, n\leq N$ and $m\neq n$, we have $\operatorname{Arg}{(z_{m})}=\operatorname{Arg}{\left(u_{m}\right)}$ and $\operatorname{Arg}{\left(z_{n}\right)}=\operatorname{Arg}{\left(u_{n}\right)}$, therefore, for some real positive $c_{m}$ and $c_{n}$, we have $z_{m}=c_{m}u_{m}$ and $z_{n}=c_{n}u_{n}$ .
    Assuming $u_{m,n}\delequal \sum_{\substack{i=1\\i\neq m,n}}^{N}z_{k}$, we have $z_{m}=c_{m}\left(z_{n}+u_{m,n}\right)$ and $z_{n}=c_{n}\left(z_{m}+u_{m,n}\right)$.
    After some algebraic manipulations, we obtain $z_{m}\!=\!\tfrac{c_{m}\left(1+c_{n}\right)}{c_{n}\left(1+c_{m}\right)}z_{n}$. 
    Hence, $\operatorname{Arg}{\left(z_{m}\right)}\!=\!\operatorname{Arg}{\left(z_{n}\right)}$ and the proof is complete.
\end{proof}
\section{Proof of Theorem \ref{Theorem_4}} \label{App_T4}
We  know $\vartheta^{\star}=\arg\left(z_{0}\right)-\arg\left(z\right)$ is the optimal variable for the continuous domain. For a discrete domain $\Omega=\left\{\omega_{1}, \omega_{2}, \dots, \omega_{\left|\Omega\right|}\right\}$, the variable $\vartheta=\omega_{k^{\star}}$ maximizes $f{\left(\cdot\right)}$, where
\begin{align}
    k^{\star}\!&=\underset{1\leq k\leq \left|\Omega\right|}{\mathrm{argmax}}\  {\left|z_{0}+ze^{j\omega_{k}}\right|^{2}}=\underset{1\leq k\leq \left|\Omega\right|}{\mathrm{argmax}}\  {\operatorname{Re}\!{\left(z_{0}z^{*}e^{-j\omega_{k}}\right)}}\nonumber\\
    &=\underset{1\leq k\leq \left|\Omega\right|}{\mathrm{argmax}}\  {\operatorname{Re}\!{\left(\left|z_{0}z^{*}\right|e^{j\left(\vartheta^{\star}-\omega_{k}\right)}\right)}}\!=\underset{1\leq k\leq \left|\Omega\right|}{\mathrm{argmax}}\ \cos{\left(\vartheta^{\star}\!-\omega_{k}\right)}\nonumber\\ 
    &=\underset{1\leq k\leq \left|\Omega\right|}{\mathrm{argmax}}\  \cos{\left(\zeta_{k}\right)}=\underset{1\leq k\leq \left|\Omega\right|}{\mathrm{argmax}}\  \cos{\left(\min\left(\zeta_{k}, 2\pi-\zeta_{k}\right)\right)}\nonumber\\ 
    &=\underset{1\leq k\leq \left|\Omega\right|}{\mathrm{argmin}}\  \min\left(\zeta_{k}, 2\pi-\zeta_{k}\right), 
\end{align}
where $\zeta_{k}\delequal \vartheta^{\star}-\omega_{k}\mod 2\pi$.\qed
\section{Proof of Theorem \ref{Theorem_5}} \label{App_T5}

Algorithm \ref{Alg4} intends to maximize $f{\left(\bm{\vartheta}\right)}=|\sum_{i=1}^{N}z_{i}e^{j\vartheta_{i}}|^{2}$ with discrete phase-shifts. We denote the phase-shift vector generated by the algorithm at iteration $m$ up to the element $n$ with $\bm{\vartheta}^{k}$, where $k=mN+n$. We will prove that the sequence $f{\left(\bm{\vartheta}^{1}\right)}, f{\left(\bm{\vartheta}^{2}\right)}, \dots$ converges.

For any integer $1\leq n\leq N$ and an integer $k\geq 1$, we define $w_{n}^{k}\delequal \sum_{\substack{i=1\\i\neq n}}^{N}z_{i}e^{j\vartheta^{k}_{i}}$.
Hence, we have $f{\left(\bm{\vartheta}^{k}\right)} \!=\! |w^{k}_{n}\!+\!z_{n}e^{j\vartheta^{k}_{n}}|^2$. At the phase update $k\!+\!1$, the algorithm only updates the phase of the element $n$ ($w^{k+1}_{n}\!=\!w^{k}_{n}$) as $f{\left(\bm{\vartheta}^{k+1}\right)} \!=\! |w^{k}_{n}\!+\!z_{n}e^{j\vartheta^{k+1}_{n}}|^2$. According to the Theorem \ref{Theorem_4}, we have 
\begin{equation}
    f{\left(\bm{\vartheta}^{k+1}\right)} \!=\! \left|w^{k}_{n}\!+\!z_{n}e^{j\vartheta^{k+1}_{n}}\!\right|^2\!\geq\! \left|w^{k}_{n}\!+\!z_{n}e^{j\vartheta^{k}_{n}}\!\right|^2\!=\!f{\left(\bm{\vartheta}^{k}\right)}.
\end{equation}
Therefore, $f{\left(\bm{\vartheta}^{1}\right)}, f{\left(\bm{\vartheta}^{2}\right)}, \dots$ form a monotonically increasing sequence. Moreover, the maximum value of $f{\left(\cdot\right)}$ is $(\sum_{n=1}^{N}\left|z_{n}\right|)^{2}$, hence, the set $\mathcal{F}\delequal\left\{f{\left(\bm{\vartheta}^{k}\right)}: k\in\mathbb{N}\right\}$ is upper bounded by $(\sum_{n=1}^{N}|z_{n}|)^{2}$. Therefore, according to the monotone convergence theorem \cite{bogachev2007measure}, we have $\lim_{k\to\infty}f{\left(\bm{\vartheta}^{k}\right)} = \sup{\mathcal{F}}\leq (\sum_{n=1}^{N}\left|z_{n}\right|)^{2}$. \qed

\bibliographystyle{IEEEtran}
\bibliography{IEEEabrv,ref}
\end{document}